\newtheorem{theorem}{Theorem}[section]
\newtheorem{lemma}[theorem]{Lemma}
\numberwithin{equation}{section}
\newcommand{\qed}{\rule{7pt}{7pt}}
\newenvironment{proof}{\noindent{\bf Proof}\hspace*{1em}}{\hfill\qed\vspace{0.125in}}
\begin{document}

\title{Precisely Verifying the Null Space Conditions in Compressed Sensing: A Sandwiching Algorithm}
\author{Myung Cho~\IEEEmembership{}
        and Weiyu Xu~\IEEEmembership{}%
\thanks{Myung Cho and Weiyu Xu  are with the Department
of Electrical and Computer Engineering, University of Iowa, Iowa City,
IA, 52242 USA e-mail: myung-cho@uiowa.edu, weiyu-xu@uiowa.edu.}
}

\maketitle

%
%
\begin{abstract}
The null space condition of sensing matrices plays an important role in guaranteeing the success of compressed sensing. In this paper, we propose new efficient algorithms to verify the null space condition in compressed sensing (CS). Given an $(n-m) \times n$ ($m>0$) CS matrix $A$ and a positive $k$, we are interested in computing $\displaystyle \alpha_k = \max_{\{z: Az=0,z\neq 0\}}\max_{\{K: |K|\leq k\}}$ $\frac{\|z_K \|_{1}}{\|z\|_{1}}$, where $K$ represents subsets of $\{1,2,...,n\}$, and $|K|$ is the cardinality of $K$. In particular, we are interested in finding the maximum $k$ such that $\alpha_k < \frac{1}{2}$. However, computing $\alpha_k$ is known to be extremely challenging.  In this paper, we first propose a series of new polynomial-time algorithms to compute upper bounds on $\alpha_k$. Based on these new polynomial-time algorithms, we further design a new sandwiching algorithm, to compute the \emph{exact} $\alpha_k$ with greatly reduced complexity. When needed, this new sandwiching algorithm also achieves a smooth tradeoff between computational complexity and result accuracy. Empirical results show the performance improvements of our algorithm over existing known methods; and our algorithm outputs precise values of $\alpha_k$, with much lower complexity than exhaustive search.
\end{abstract}
\begin{IEEEkeywords}
Compressed sensing, verifying the null space condition, the null space condition, $\ell_1$ minimization
\end{IEEEkeywords}
\section{Introduction}
\label{Intro}

In compressed sensing, a sensing matrix $A \in \mathbf R^{(n-m) \times n}$ with $0<m<n$ is given,
and we have $y=Ax$, where $y \in \mathbf R^{n-m}$ is a measurement result and $x \in \mathbf R^n$ is a signal.
The sparest solution $x$ to the underdetermined equation $y=Ax$ is given by (\ref{Intro_eqn1}):
\begin{align}
\label{Intro_eqn1}
          \min              & \quad \| x \|_{0}       \nonumber \\
          \text{subject to} & \quad Ax = y
\end{align}
When the vector $x$ has only $k$ nonzero elements ($k$-sparse signal, $k \ll n$), the solution of (\ref{Intro_eqn2}),
which is called $\ell_1$ minimization, coincides with the solution of (\ref{Intro_eqn1}) under certain conditions,
such as restricted isometry conditions \cite{candes2005decoding,d2011testing,candes2006robust,candes2006stable,donoho2005neighborly,recht2011null}.
\begin{align}
\label{Intro_eqn2}
          \min              & \quad \| x \|_{1}       \nonumber \\
          \text{subject to} & \quad Ax = y
\end{align}
In order to guarantee that we can recover the sparse signal by solving $\ell_1$ minimization, we need to check these conditions are satisfied. The necessary and sufficient condition for the solution of (\ref{Intro_eqn2}) to coincide with the solution of (\ref{Intro_eqn1}) is the null space condition (NSC) \cite{juditsky2011verifiable,cohen2009compressed}.
Namely, when the NSC holds for a number $k$, then any $k$-sparse signal $x$ can be exactly recovered by solving $\ell_1$ minimization.
This NSC is defined as follows.

Given a matrix $A \in \mathbf R^{(n-m) \times n}$ with $0<m<n$,
\begin{align}
\label{Intro_eqn3}
     & || z_K ||_1 < || z_{\overline K} ||_1, \\
     & \forall z \in \{z:Az=0, z\neq 0\},\; \forall K \subseteq \{1,2,...n\}\; with\; |K|\leq k, \nonumber
\end{align}
where $K$ is an index set, $|K|$ is the cardinality of $K$, $z_K$ is the elements of $z$ vector corresponding to the index set $K$, and $\overline{K}$ is the complement of $K$.
$\alpha_k$ is defined as below, and $\alpha_k$ should be smaller than $\frac{1}{2}$ in order to satisfy the NSC.
\begin{align*}
    \alpha_k = \max_{\{z: Az=0,z\neq 0\}}\max_{\{K: |K|\leq k\}} \frac{\|z_K \|_{1}}{\|z\|_{1}}
\end{align*}
A smaller $\alpha_k$ generally means more robustness in recovering approximately sparse signal $x$ via $\ell_1$ minimization \cite{juditsky2011verifiable,cohen2009compressed,xu2011precise}.

When a matrix $H \in \mathbf R^{n \times m},\; n > m$, is the basis of the null space of $A$ ($AH = 0$), then the property (\ref{Intro_eqn3}) is equivalent to the following property (\ref{Intro_eqn4}):
\begin{align}
\label{Intro_eqn4}
        & \| (Hx)_{K} \|_{1} < \| (Hx)_{\overline{K}} \|_{1}, \\
        & \forall x \in \mathbf{R}^{m}, x \neq 0, \forall K \subseteq \{1,2,...n\}\; with\; |K|\leq k, \nonumber
\end{align}
where $K$ is an index set, $|K|$ is the cardinality of $K$, $(Hx)_K$ is the elements of $(Hx)$ corresponding to the index set $K$, and $\overline{K}$ is the complement of $K$.
(\ref{Intro_eqn4}) holds if and only if the optimum value of (\ref{Intro_eqn6}) is smaller than 1.
We define the optimum value of (\ref{Intro_eqn6}) as $\beta_k$:
\begin{align}
\label{Intro_eqn6}
          \beta_k = \max_{ x \in \mathbf{R}^{m},\; |K| \leq k }  & \quad \| (Hx)_{K} \|_{1}                    \nonumber \\
                    \text{subject to}                            & \quad \| (Hx)_{\overline{K}} \|_{1} \leq 1 .
\end{align}
And then $\alpha_k$ is rewritten as below:
\begin{align*}
    \alpha_k = \max_{\{x \in R^m, x \neq 0\}}\max_{\{K: |K|\leq k\}} \frac{\|(Hx)_K \|_{1}}{\|(Hx)\|_{1}} = \frac{ \beta_k }{ 1+\beta_k }.
\end{align*}
We are interested in computing $\alpha_k$, and particularly finding the maximum $k$ such that $\alpha_k < \frac{1}{2}$.

However, solving the programming (\ref{Intro_eqn6}) is difficult,
because there are at least $\binom{n}{k}$ subsets $K$, which can be exponentially large in $n$ and $k$,
and the objective function is not a concave function.  In fact, \cite{pfetsch2012computational} shows that given a matrix $A$ and a number $k$, computing $\alpha_k$ is strongly NP-hard. Under these computational difficulties, testing the NSC was often conducted by obtaining an upper or lower bound on $\alpha_k$ \cite{d2011testing,juditsky2011verifiable,tang2011performance,tang2011verifiable,lee2008computing}.
In \cite{d2011testing} and \cite{tang2011verifiable}, semidefinite relaxation methods were introduced by transforming the NSC into semidefinite programming to obtain the bounds on $\alpha_k$ or related quantities.
In \cite{juditsky2011verifiable} and \cite{tang2011performance}, linear programming (LP) relaxations were introduced to obtain the upper and lower bounds on $\alpha_k$.
Those papers showed computable performance guarantees on sparse signal recovery with bounds on $\alpha_k$.
However, the bounds resulting from \cite{d2011testing,juditsky2011verifiable,tang2011performance,tang2011verifiable,lee2008computing},
did not provide the exact value of $\alpha_k$, which led to a small $k$ value satisfying the null space conditions.

In this paper, we first propose a series of new polynomial-time algorithms to compute upper bounds on $\alpha_k$.
Based on these new polynomial-time algorithms, we further design a new sandwiching algorithm, to compute the \emph{exact} $\alpha_k$ with greatly reduced complexity.
This new sandwiching algorithm also offers a natural way to achieve a smooth tradeoff between computational complexity and result accuracy.
By computing the exact $\alpha_k$, we obtained bigger $k$ values than results from \cite{d2011testing} and \cite{juditsky2011verifiable}.

This paper is organized as follows.
In Section \ref{Sec2}, we provide the pick-$1$-element algorithm and a proof showing that the pick-$1$-element algorithm provides an upper bound on $\alpha_k$.
In Section \ref{Sec3}, we provide the pick-$l$-element algorithms, $2 \leq l \leq k$, and a proof showing that the pick-$l$-element algorithms also provide upper bounds on $\alpha_k$.
In Section \ref{Sec3-1}, we consider the pick-$l$-element algorithms with optimized coefficients, $1 \leq l \leq k$, and a proof showing that when $l$ is increased, upper bound on $\alpha_k$ from the pick-$l$-element algorithm with optimized coefficients becomes smaller or stays the same.
In Section \ref{Sec4}, we propose a sandwiching algorithm
based on the pick-$l$-element algorithms to obtain the exact $\alpha_k$.
In Section \ref{Sec5} and \ref{Sec6}, we provide empirical results showing that the improved performance of our algorithm over existing methods
and conclude our paper by discussing extensions and future directions.

\section{Pick-1-element Algorithm}
\label{Sec2}

Given a matrix $H \in \mathbf R^{n \times m},\; 0<m<n$, in order to verify $\alpha_k < \frac{ 1 }{ 2 }$, we propose a polynomial-time algorithm to find an upper bound on $\alpha_k$. Let us define $\alpha_{1,\{i\}}$ as $\frac{\beta_{1,\{i\}}}{1+\beta_{1,\{i\}}}$ and $\beta_{1,\{i\}}$ as below:
\begin{align}
\label{Sec2_eqn1}
          \beta_{1,\{i\}} = \max_{ x \in \mathbf{R}^{m} }  & \quad \| (Hx)_{ \{i\} } \|_{1}          & \nonumber \\
                        \text{subject to}              & \quad \| (Hx)_{\overline{ \{i\} }} \|_{1} \leq 1,  &
\end{align}
where $(Hx)_{\{i\} }$ is the $i$-th element in $(Hx)$ and $(Hx)_{\overline{ \{i\} }}$ is the rest elements in $(Hx)$.
The subscript $1$ in $\beta_{1,\{i\}}$ is used to represent one element and the $\{i\}$ in $\beta_{1,\{i\}}$ is used to represent the $i$-th element in $(Hx)$. The pick-$1$-element algorithm is given as follows to compute an upper bound on $\alpha_k$. \\

\begin{algorithm}
\LinesNumbered
  \caption{Pick-$1$-element Algorithm for computing an upper bound on $\alpha_k$ in Pseudo code}

    \KwIn{H matrix}
    \For {$i=1$ \KwTo $n$} {
        $\beta_{1,\{i\}}$ $\leftarrow$ output of (\ref{Sec2_eqn1}) \par
        $\alpha_{1,\{i\}}  \leftarrow  \beta_{1,\{i\}}/ (1 + \beta_{1,\{i\}})$ \par
    }
    Sort $\alpha_{1,\{i\}}$, $i=1,...,n$, in descending order: $\alpha_{1,\{i_j\}}$, $j=1,...,n$  \par
    Compute an upper bound from the following equation
    \begin{align*}
        \sum_{ j=1 }^{k} \alpha_{1,\{i_j\}}
    \end{align*} \par

    \If {upper bound $< \frac{1}{2}$} {
        NSC is satisfied
    }

\end{algorithm}

\begin{algorithm}
\LinesNumbered
  \caption{Pick-$1$-element Algorithm for computing an upper bound on $\alpha_k$ in description}

        \nlset{1}Given a matrix $H$, find an optimum value of (\ref{Sec2_eqn1}): $\beta_{1,\{i\}}$, $i=1,2,...,n$. \par
        \nlset{2}Compute $\alpha_{1,\{i\}}$ with the values from Step 1: $\alpha_{1,\{i\}} = \frac{ \beta_{1,\{i\}} }{1+\beta_{1,\{i\}} }$, $i=1,2,...,n$ \par
        \nlset{3}Sort these $n$ different values of $\alpha_{1,\{i\}}$ in descending order: $\alpha_{1,\{i_1\}}, \alpha_{1,\{i_2\}},..., \alpha_{1,\{i_n\}}$, where $\alpha_{1,\{i_1\}} \geq \alpha_{1,\{i_2\}} \geq ... \geq \alpha_{1,\{i_n\}}$ \par
        \nlset{4}Compute the sum of the first $k$ values of $\alpha_{1,\{i_j\}}$:  $\sum_{ j=1 }^{k} \alpha_{1,\{i_j\}} $ \par
        \nlset{5}If the result from Step 4 is smaller $\frac{1}{2}$, then the null space condition is satisfied.
\end{algorithm}

\begin{lemma}
\label{lemma2_1}
$\alpha_k$ can not be larger than the sum of the k largest $\alpha_{1,\{i\}}$. Namely,
\begin{align*}
        \alpha_k
        & \leq
        \sum_{ j=1 }^{k} \alpha_{1,\{i_j\}},
\end{align*}
where $\alpha_{1,\{i_1\}} \geq \alpha_{1,\{i_2\}} \geq ... \geq \alpha_{1,\{i_k\}} \geq ... \geq \alpha_{1,\{i_n\}}$, $i_1, i_2,..., i_k,...,i_n \in \{1,2,...,n\}$, and $i_a \neq i_b$ if $a \neq b $. The subscript $j$ of $i_j$ in $\alpha_{1,\{i_j\}}$ is used to represent that the values are sorted.
\end{lemma}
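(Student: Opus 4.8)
The plan is to work directly with the ratio characterizations of $\alpha_k$ and $\alpha_{1,\{i\}}$ rather than with the constrained programs defining $\beta_k$ and $\beta_{1,\{i\}}$, since the ratio form exposes the additive structure of the objective over coordinates. Recall from the excerpt that
\[
\alpha_k = \max_{x \neq 0}\ \max_{|K|\leq k}\ \frac{\|(Hx)_K\|_1}{\|(Hx)\|_1}.
\]
As a first step I would record the single-index analogue explicitly, namely that
\[
\alpha_{1,\{i\}} = \max_{x \neq 0}\ \frac{|(Hx)_i|}{\|(Hx)\|_1}.
\]
This follows from the definition $\alpha_{1,\{i\}} = \beta_{1,\{i\}}/(1+\beta_{1,\{i\}})$ by the same homogeneity argument used for $\alpha_k$: rescaling $x$ turns the constrained program into the unconstrained ratio $\|(Hx)_{\{i\}}\|_1/\|(Hx)_{\overline{\{i\}}}\|_1$, and composing with the increasing map $t \mapsto t/(1+t)$ converts this into the ratio displayed above.

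Next I would let $x^{*}$ and $K^{*}$ (with $|K^{*}|\leq k$) attain the maximum defining $\alpha_k$. The crucial observation is that the $\ell_1$ norm splits coordinatewise, $\|(Hx^{*})_{K^{*}}\|_1 = \sum_{i\in K^{*}} |(Hx^{*})_i|$, so dividing through by $\|(Hx^{*})\|_1$ gives
\[
\alpha_k = \sum_{i\in K^{*}} \frac{|(Hx^{*})_i|}{\|(Hx^{*})\|_1}.
\]
For each individual $i\in K^{*}$, the single-coordinate ratio evaluated at $x^{*}$ is bounded above by its maximum over all admissible $x$, which is precisely $\alpha_{1,\{i\}}$; summing these per-coordinate bounds yields $\alpha_k \leq \sum_{i\in K^{*}}\alpha_{1,\{i\}}$.

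Finally I would pass from the sum over the particular set $K^{*}$ to the sum over the $k$ largest values. Because every $\alpha_{1,\{i\}}$ is a ratio of nonnegative quantities and hence nonnegative, and because $|K^{*}|\leq k$, the sum over $K^{*}$ cannot exceed the sum of the $k$ largest entries:
\[
\sum_{i\in K^{*}}\alpha_{1,\{i\}} \leq \sum_{j=1}^{k} \alpha_{1,\{i_j\}},
\]
with the $\alpha_{1,\{i_j\}}$ sorted in descending order as in the statement. Chaining the inequalities gives the claim.

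The step I expect to demand the most care is the per-coordinate bound $\frac{|(Hx^{*})_i|}{\|(Hx^{*})\|_1}\leq \alpha_{1,\{i\}}$. The subtlety is that the same optimizer $x^{*}$ must serve \emph{simultaneously} for every $i\in K^{*}$ in the numerator decomposition, whereas each $\alpha_{1,\{i\}}$ is allowed its own optimizing vector. This quantifier exchange is legitimate precisely because $\alpha_{1,\{i\}}$ is a maximum over all $x\neq 0$, so it dominates the value attained at the fixed $x^{*}$; making this point explicit is what keeps the argument honest, and it is also where a careless proof would silently assume a common maximizer that need not exist.
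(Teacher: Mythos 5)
Your proof is correct and follows essentially the same route as the paper's: decompose $\|(Hx^{*})_{K^{*}}\|_1$ coordinatewise, bound each ratio $|(Hx^{*})_i|/\|(Hx^{*})\|_1$ by its own maximum $\alpha_{1,\{i\}}$ (exactly the quantifier-exchange point you flag), and then dominate the sum over $K^{*}$ by the $k$ largest sorted values. Your write-up is in fact slightly more careful than the paper's, since you make explicit both the homogeneity argument identifying $\alpha_{1,\{i\}}$ with the unconstrained ratio and the nonnegativity needed when $|K^{*}|<k$.
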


\begin{proof}
We assume that when $x = x^{i}, \;\; i = 1,2,3,...,n$, we achieve the optimum value $\alpha_{1,\{i\}} (= \frac{\beta_{1,\{i\}}}{1+\beta_{1,\{i\}}})$.
Namely,
\begin{align*}
          \beta_{1,\{i\}} = \max_{ x \in \mathbf{R}^{m} }  & \quad \| (Hx)_{ \{i\} } \|_{1}                    & \\
                            \text{subject to}              & \quad \| (Hx)_{\overline{ \{i\} }} \|_{1} \leq 1  &
\end{align*}
And we assume that when $x = x^{*}$, we achieve the optimum value $\alpha_{k} (= \frac{\beta_{k}}{1+\beta_{k}})$.
\begin{align*}
          \beta_k = \max_{ x \in \mathbf{R}^{m},\; |K| \leq k }  & \quad \| (Hx)_{K} \|_{1}                   & \\
                    \text{subject to}                            & \quad \| (Hx)_{\overline{K}} \|_{1} \leq 1 &
\end{align*}
The inequality in Lemma \ref{lemma2_1} is the same as the following (\ref{Sec2_eqn2}):
\begin{align}
\label{Sec2_eqn2}
       \underbrace{ \frac{ \| (Hx^{*})_{K} \|_{1} }{ \| (Hx^{*}) \|_{1} } }_{ |K| \leq k}
       \leq
       \sum_{j=1}^{k} \frac{ \| (Hx^{i_j})_{ \{i_j\} } \|_{1} }{ \| (Hx^{i_j}) \|_{1} },
\end{align}
(\ref{Sec2_eqn2}) can be rewritten as (\ref{Sec2_eqn3}).
\begin{align}
\label{Sec2_eqn3}
       \sum_{i \in K } \frac{ \| (Hx^{*})_{\{i\}} \|_{1} }{ \| (Hx^{*}) \|_{1} }
       \leq
       \sum_{j=1}^{k} \frac{ \| (Hx^{i_j})_{ \{i_j\} } \|_{1} }{ \| (Hx^{i_j}) \|_{1} }
\end{align}
The left-hand side of (\ref{Sec2_eqn3}) can not be larger than the sum of the $\alpha_{1,\{i\}}$, which is the maximum value for the $i$-th element.
\begin{align*}
     & \sum_{i \in K } \frac{ \| (Hx^{*})_{\{i\}} \|_{1} }{ \| (Hx^{*}) \|_{1} } \leq
        \sum_{i \in K } \underbrace{ \frac{ \| (Hx^{i})_{\{i\}} \|_{1} }{ \| (Hx^{i}) \|_{1} } }_{ \substack{ \text{$\alpha_{1,\{i\}}$} \\ \text{maximum value of the $i$-th element} } }
\end{align*}
The sum of $\alpha_{1,\{i\}},\; i \in K$, can not be larger than the sum of the $k$ largest $\alpha_{1,\{i_j\}},\; j=1,2,...,k$.
{\small
\begin{align*}
     &   \sum_{i \in K } \underbrace{ \frac{ \| (Hx^{i})_{\{i\}} \|_{1} }{ \| (Hx^{i}) \|_{1} } }_{ \substack{ \text{maximum value} \\ \text{of $1$ element in a set $K$}  } } \nonumber \\
     &  \leq
       \underbrace{ \frac{ \| (Hx^{i_1})_{ \{i_1\} } \|_{1} }{ \| (Hx^{i_1}) \|_{1} } }_{ \substack{\text{1st max. value of $1$ element}} }
      + ...
     + \underbrace{ \frac{ \| (Hx^{i_k})_{ \{i_k\} } \|_{1} }{ \| (Hx^{i_k}) \|_{1} } }_{ \substack{\text{$k$-th max. value of $1$ element}} }
\end{align*}
}%
\end{proof}

\section{Pick-$l$-element Algorithms}
\label{Sec3}

In order to obtain better bounds on $\alpha_k$ than the pick-$1$-element algorithm,
in this section we generalize the pick-$1$-element algorithm to the pick-$l$-element algorithms,
where $l\geq 2$ is a fixed chosen integer no bigger than $k$.
The basic idea is to first compute the maximum portion $\displaystyle \max_{ x \in \mathbf{R}^{m} }\frac{\|(Hx)_{L}\|_1}{\|(Hx)\|_1}$ for every subset $L \subseteq \{1,2,..,n\}$ with cardinality $|L|=l$.
One can then garner this information to efficiently compute an upper bound on $\alpha_{k}$.

We first index the $\binom{n}{l}$ subsets with cardinality $l$ by indices $1$,$2$,..., and $\binom{n}{l}$; and we denote the subset corresponding to index $i$ as $L_i$.
Let us define $\beta_{l,L_i},\;\; i \in \{1,2,3,..., {\binom{n}{l}} \},\;$ as:
\begin{align}
\label{Sec3_eqn1}
          \beta_{l,L_i}=\max_{ x \in \mathbf{R}^{m} }  & \quad \| (Hx)_{L_{i}} \|_{1}      & \nonumber \\
                        \text{subject to}              & \quad \| (Hx)_{\overline{L_{i}}} \|_{1} \leq 1                      &
\end{align}
The subscript $l$ in $\beta_{l,L_i}$ is used to denote the cardinality $l$ of the set $L_i$,
and $i$ in $\beta_{l,L_i}$ is the index of $L_i$. The pick-$l$-element algorithm in pseudocode and in description are respectively listed as follows. \\

\begin{algorithm}
\LinesNumbered
  \caption{Pick-$l$-element Algorithms, $2 \leq l \leq k$ for computing upper bounds on $\alpha_k$ in Pseudo code}

    \KwIn{H matrix} \par

    \For {$i=1$ \KwTo $n \choose l$}{
        $\beta_{l,L_i}$ $\leftarrow$ output of (\ref{Sec3_eqn1}) \par
        $\alpha_{l,L_i} \leftarrow  \beta_{l,L_i}/ (1 + \beta_{l,L_i})$ \par
    }
    Sort $\alpha_{l,L_i}$, $i=1,...,\binom{n}{l}$ in descending order: $\alpha_{l,L_{i_j}}$, $j=1,...,{\binom{n}{l}}$ \par
    Compute an upper bound from the following equation
    \begin{align*}
        \bigg( \frac{1}{{k-1 \choose l-1}} \bigg) \bigg( \sum_{ j=1 }^{k \choose l} \alpha_{l,L_{i_j}} \bigg)
    \end{align*} \par

    \If {upper bound $< \frac{1}{2}$} {
        NSC is satisfied
    }

\end{algorithm}

\begin{algorithm}
\LinesNumbered
  \caption{Pick-$l$-element Algorithms, $2 \leq l \leq k$ for computing upper bounds on $\alpha_k$ in description}
        \nlset{1}Given a matrix $H$, find an optimum value of (\ref{Sec3_eqn1})
                : $\beta_{l,L_i},\; i \in \{1,2,..., {n \choose l} \}$. \par
        \nlset{2}Compute $\alpha_{l,L_{i}}$ from $\beta_{l,L_{i}}$
                : $\alpha_{l,L_{i}} = \frac{ \beta_{l,L_{i}} }{ 1+\beta_{l,L_{i}} }$, $i=1,2,..., {n \choose l} $. \par
        \nlset{3}Sort these $n \choose l$ different values of $\alpha_{l,L_{i}}$ in descending order
                : $\alpha_{l,L_{i_j}}$, where $ j=1,2,..., {n \choose l} $ and $\alpha_{l,L_{i_a}} \geq \alpha_{l,L_{i_b}}$ when $a \leq b $. \par
        \nlset{4}Compute the sum of the first ${\binom{n}{l}}$ values of $\alpha_{l,L_{i_j}}$  and
                divide the sum with $( {\binom{k-1}{l-1}} )$:
                \begin{align*}
                    \bigg( \frac{1}{{\binom{k-1}{l-1}}} \bigg) \bigg( \sum_{ j=1 }^{k \choose l} \alpha_{l,L_{i_j}} \bigg)
                \end{align*} \par
        \nlset{5}If the result from Step 4 is smaller than $\frac{1}{2}$, then the null space condition is satisfied. \par
\end{algorithm}

The following lemma establishes an upper bound on $\alpha_k$.

\begin{lemma}
\label{lemma3_1}
$\alpha_k$ can not be larger than the output of the pick-$l$-element algorithms, where $2 \leq l \leq k$. Namely,
\begin{align*}
       \alpha_k
        \leq
        \underbrace{  \bigg( \frac{1}{{k-1 \choose l-1}} \bigg) \bigg( \sum_{ j=1 }^{k \choose l} \alpha_{l,L_{i_j}} \bigg) }_{  \substack{ \text{upper bound calculated  with}\\ \text{the pick-$l$-element algorithm}} },
\end{align*}
where $i_j \in \{1,2,3,...,{n \choose l} \}$ ($1\leq j \leq \binom{n}{l}$) are $\binom{n}{l}$ distinct numbers; and $\alpha_{l,L_{i_1}} \geq \alpha_{l,L_{i_2}} \geq ... \geq \alpha_{l,L_{i_{n \choose l}}}$.
\end{lemma}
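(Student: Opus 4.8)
The plan is to follow the template of the pick-$1$ proof of Lemma~\ref{lemma2_1}, but to aggregate information over all size-$l$ sub-collections of the optimal support rather than over single indices. First I would let $x^{*}$ together with an index set $K$ attain the maximum defining $\alpha_k$, and argue that without loss of generality $|K|=k$: for any fixed $x$, enlarging $K$ up to cardinality $k$ can only increase $\|(Hx)_K\|_1$ (one is only adding nonnegative terms), so the outer maximum over $\{K:|K|\le k\}$ is attained at a set of exactly $k$ indices, padding with extra indices carrying zero coordinates if necessary. Writing the attained ratio as a sum of per-coordinate ratios gives
\begin{align*}
\alpha_k = \frac{\|(Hx^{*})_K\|_1}{\|(Hx^{*})\|_1} = \sum_{i\in K}\frac{\|(Hx^{*})_{\{i\}}\|_1}{\|(Hx^{*})\|_1}.
\end{align*}

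The key step is a double-counting argument over the $\binom{k}{l}$ subsets $L\subseteq K$ with $|L|=l$. For each such $L$ I would first bound the contribution of $x^{*}$ by the optimal value for that subset,
\begin{align*}
\frac{\|(Hx^{*})_L\|_1}{\|(Hx^{*})\|_1}\le \alpha_{l,L},
\end{align*}
which holds because $\alpha_{l,L}=\beta_{l,L}/(1+\beta_{l,L})$ is the maximum of this ratio over all $x$, by the definition (\ref{Sec3_eqn1}). Summing over all $L\subseteq K$ with $|L|=l$, and noting that each fixed index $i\in K$ belongs to exactly $\binom{k-1}{l-1}$ of these subsets (choose the remaining $l-1$ members from the other $k-1$ indices of $K$), I would obtain
\begin{align*}
\binom{k-1}{l-1}\,\alpha_k = \sum_{\substack{L\subseteq K\\ |L|=l}}\;\sum_{i\in L}\frac{\|(Hx^{*})_{\{i\}}\|_1}{\|(Hx^{*})\|_1} = \sum_{\substack{L\subseteq K\\ |L|=l}}\frac{\|(Hx^{*})_L\|_1}{\|(Hx^{*})\|_1}\le \sum_{\substack{L\subseteq K\\ |L|=l}}\alpha_{l,L}.
\end{align*}

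To finish, I would observe that the $\binom{k}{l}$ sets $L\subseteq K$ are distinct members of the full family of $\binom{n}{l}$ cardinality-$l$ subsets, so their $\alpha_{l,L}$ values form a sub-multiset of $\{\alpha_{l,L_i}\}$; hence their sum is at most the sum of the $\binom{k}{l}$ largest such values, namely $\sum_{j=1}^{\binom{k}{l}}\alpha_{l,L_{i_j}}$ in the descending ordering. Dividing the displayed chain by the positive integer $\binom{k-1}{l-1}$ then yields exactly the claimed bound.

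I expect the main obstacle to be the combinatorial bookkeeping rather than any analytic difficulty: specifically, justifying the reduction to $|K|=k$ (and checking that padding with zero coordinates leaves the identity $\binom{k-1}{l-1}\alpha_k=\sum_{L}\|(Hx^{*})_L\|_1/\|(Hx^{*})\|_1$ intact), pinning down the multiplicity coefficient $\binom{k-1}{l-1}$, and confirming that the final majorization by the sorted top-$\binom{k}{l}$ sum is legitimate because the subsets $L\subseteq K$ are genuinely distinct entries of the global enumeration. Once these points are verified, the remainder is a direct generalization of the pick-$1$ argument.
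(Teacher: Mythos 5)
Your proposal is correct and follows essentially the same route as the paper's own proof: both decompose the ratio $\frac{\|(Hx^{*})_{K}\|_{1}}{\|(Hx^{*})\|_{1}}$ over the $\binom{k}{l}$ cardinality-$l$ subsets of the optimal support (each index appearing $\binom{k-1}{l-1}$ times), bound each subset's ratio by its maximum $\alpha_{l,L}$, and then majorize the resulting sum by the top $\binom{k}{l}$ entries of the sorted global list. Your version is slightly more careful than the paper's in two cosmetic respects — you justify reducing to $|K|=k$ by padding, and you invoke the $\alpha$--$\beta$ monotone relationship directly instead of introducing separate maximizers $x^{i^{*}}$ — but the underlying argument is the same.
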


\begin{proof}
Suppose that the maximum value of the programming (\ref{Intro_eqn6}), namely $\beta_k$, is achieved when $K = K^{*}$. Let $L^{*}_{i}$, $1\leq i \leq \binom{k}{l}$,
be the family of subsets of $K^{*}$, with cardinality $l$. It is not hard to see that each element of $K^{*}$ appears in ${ k-1 \choose l-1 }$ such subsets. In particular, we have
\begin{align*}
        K^{*} = \bigcup_{i=1}^{ k \choose l } L^{*}_i .
\end{align*}
Thus, $\forall x \in \mathbf R^m$, we can represent $\frac{ \| (Hx)_{K^{*}} \|_{1} }{ \| (Hx) \|_{1} }$ as follows.
\begin{align}
\label{Sec3_eqn2}
       \frac{ \| (Hx)_{K^{*}} \|_{1} }{ \| (Hx) \|_{1} } =
       \bigg( \frac{1}{{k-1 \choose l-1}} \bigg) \bigg( \sum_{i=1}^{k \choose l} \frac{ \| (Hx)_{ L^{*}_i  } \|_{1} }{ \| (Hx) \|_{1} } \bigg)
\end{align}
Suppose that each term of the right-hand side of (\ref{Sec3_eqn2}), $\frac{ \| (Hx)_{ L^{*}_i  } \|_{1} }{ \| (Hx) \|_{1} }$, achieves the maximum value when $x = x^{i^{*}}, \;\; i = 1,...,{k \choose l}$;
and the maximum value of $\frac{\|(Hx)_{K^{*}}\|_{1}}{\|Hx\|_1}$ in (\ref{Sec3_eqn2})  is achieved when $x = x^{*}$.
Then, $\forall x \in \mathbf R^m$, we have
\begin{align}
\label{Sec3_eqn3}
       \frac{ \| (Hx)_{K^{*}} \|_{1} }{ \| (Hx) \|_{1} }
       & = \bigg( \frac{1}{{k-1 \choose l-1}} \bigg) \bigg( \sum_{i=1}^{k \choose l} \frac{ \| (Hx)_{ L^{*}_i  } \|_{1} }{ \| (Hx) \|_{1} } \bigg), \; \forall x \in \mathbf R^m \nonumber \\
       & \leq \bigg( \frac{1}{{k-1 \choose l-1}} \bigg) \bigg( \sum_{i=1}^{k \choose l} \frac{ \| (Hx^{{i^{*}}})_{ L^{*}_i  } \|_{1} }{ \| (Hx^{{i^{*}}}) \|_{1} } \bigg).
\end{align}
In the meantime, the maximum output from the  pick-$l$-element algorithm is
\begin{align*}
       &\bigg( \frac{1}{{k-1 \choose l-1}} \bigg) \bigg( \sum_{j=1}^{k \choose l} \frac{ \| (Hx^{i_j})_{ L_{i_j} } \|_{1} }{ \| (Hx^{{i_j}}) \|_{1} } \bigg), j = 1,...,{ k \choose l }.
\end{align*}
By our definitions of indices $i_j$'s, we have
\begin{align}
\label{Sec3_eqn4}
       &\bigg( \frac{1}{{k-1 \choose l-1}} \bigg) \bigg( \sum_{i=1}^{k \choose l} \frac{ \| (Hx^{i^{*}})_{ L^{*}_i  } \|_{1} }{ \| (Hx^{i^{*}}) \|_{1} } \bigg) \nonumber\\
       &\leq \bigg( \frac{1}{{k-1 \choose l-1}} \bigg) \bigg( \sum_{j=1}^{k \choose l} \frac{ \| (Hx^{i_j})_{ L_{i_j}  } \|_{1} }{ \| (Hx^{i_j}) \|_{1} } \bigg).
\end{align}
Combining (\ref{Sec3_eqn3}), and (\ref{Sec3_eqn4}) leads to
\begin{align*}
       & \frac{ \| (Hx^{*})_{K} \|_{1} }{ \| (Hx^{*}) \|_{1} }
         \leq \bigg( \frac{1}{{k-1 \choose l-1}} \bigg) \bigg( \sum_{j=1}^{k \choose l} \frac{ \| (Hx^{i_j})_{ L_{i_j}  } \|_{1} }{ \| (Hx^{i_j}) \|_{1} } \bigg). \nonumber
\end{align*}
Therefore, we have finished proving this lemma.
\end{proof}

\section{Pick-$l$-element algorithms with optimized coefficients}
\label{Sec3-1}

The pick-$l$-element algorithm has $\frac{1}{\binom{k-1}{l-1}}$ as its coefficients. In this section, we show that we can actually strengthen the upper bounds of the pick-$l$-element algorithms, at the cost of additional polynomial-time complexity. In fact, we can calculate improved upper bounds on $\alpha_k$, using \emph{the pick-$l$-element algorithm with optimized coefficients}.
From this new algorithm, we can show when $l$ is increased, the upper bound on $\alpha_k$ becomes smaller or stays the same.

The upper bound from the pick-$l$-element algorithm is given by
\begin{align}
\label{pickl_element_algo}
       \bigg( \frac{1}{{k-1 \choose l-1}} \bigg) \bigg( \sum_{ j=1 }^{k \choose l} \alpha_{l,L_{i_j}} \bigg),
\end{align}
where $ \alpha_{l,L_{i_j}}, j=1,...,\binom{n}{l}$ are sorted in descending order.

The upper bound from the pick-$l$-element algorithm with optimized coefficients is obtained by solving the following problem:
\begin{align}
\label{pickl_optimized_coefficients}
        \max_{\gamma_i,\; 1 \leq i \leq \binom{n}{l}}                & \quad (\sum_{i=1}^{\binom{n}{l}} \gamma_i \; \alpha_{l,L_{i}}) & \nonumber \\
        \text{subject to}   & \quad \gamma_i \geq 0,\; 1 \leq i \leq \binom{n}{l},  & \nonumber \\
                            & \quad \sum_{i=1}^{\binom{n}{l}} \gamma_i \leq \frac{k}{l},  & \nonumber \\
                            & \quad \sum_{ \{i: I \subseteq L_i,\; 1\leq i \leq \binom{n}{l} \} } \gamma_i  \leq \frac{\binom{k-b}{l-b}}{\binom{k-1}{l-1}}, & \nonumber \\
                            &\;\begin{subarray}{l} {for\;all\;integers\; b\;such\; that\; 1\leq b \leq l,} \\ {for\; all\; subsets\; I\; with\; |I|=b} \end{subarray}.
\end{align}

\begin{lemma}
\label{pickl_optimized_coefficients_vs_the_pickl_element_algo}
The pick-$l$-element algorithm with optimized coefficients provides tighter, or at least the same, upper bound than the pick-$l$-element algorithm.
\end{lemma}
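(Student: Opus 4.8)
The plan is to exhibit the standard pick-$l$-element bound (\ref{pickl_element_algo}) as the optimal value of a \emph{relaxed} linear program, and then to observe that the optimized-coefficient program (\ref{pickl_optimized_coefficients}) is merely a more tightly constrained version of that same program. Since imposing extra constraints on a maximization can only lower or preserve the optimal value, the optimized bound cannot exceed the standard one, which is exactly the claim.

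First I would isolate, among the constraints of (\ref{pickl_optimized_coefficients}), the special case $b=l$. For a subset $I$ with $|I|=l$, the only index set $L_i$ of cardinality $l$ that contains $I$ is $I$ itself, so the constraint $\sum_{\{i:\,I\subseteq L_i\}}\gamma_i \leq \frac{\binom{k-l}{0}}{\binom{k-1}{l-1}}$ collapses to the single-variable cap $\gamma_i \leq \frac{1}{\binom{k-1}{l-1}}$ for every $i$. Discarding all the constraints with $b<l$, I obtain the relaxed program
\begin{align}
\label{relaxed_LP}
\max_{\gamma_i}\ \sum_{i=1}^{\binom{n}{l}}\gamma_i\,\alpha_{l,L_i}\quad
\text{subject to}\quad 0\leq \gamma_i\leq \tfrac{1}{\binom{k-1}{l-1}},\ \ \sum_{i=1}^{\binom{n}{l}}\gamma_i\leq \tfrac{k}{l}.
\end{align}
Every feasible point of (\ref{pickl_optimized_coefficients}) is feasible for (\ref{relaxed_LP}), so the optimum of (\ref{pickl_optimized_coefficients}) is at most that of (\ref{relaxed_LP}).

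Next I would solve (\ref{relaxed_LP}) explicitly. Because each $\alpha_{l,L_i}\geq 0$ (inherited from $\beta_{l,L_i}\geq 0$) and the $\gamma_i$ share the uniform cap $\frac{1}{\binom{k-1}{l-1}}$, this is a fractional-knapsack problem whose greedy optimum loads the cap onto the largest coefficients first. The key arithmetic identity is $\frac{k}{l}\binom{k-1}{l-1}=\binom{k}{l}$, which shows the budget $\frac{k}{l}$ is filled \emph{exactly} by assigning $\gamma=\frac{1}{\binom{k-1}{l-1}}$ to precisely $\binom{k}{l}$ indices and $\gamma=0$ to the rest, with no fractional leftover. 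Since the $\alpha_{l,L_{i_j}}$ are sorted in descending order, the optimum of (\ref{relaxed_LP}) equals
\begin{align*}
\Big(\tfrac{1}{\binom{k-1}{l-1}}\Big)\Big(\sum_{j=1}^{\binom{k}{l}}\alpha_{l,L_{i_j}}\Big),
\end{align*}
which is exactly the pick-$l$-element bound (\ref{pickl_element_algo}). Chaining this with the feasibility inclusion from the previous paragraph completes the argument.

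The step requiring the most care is the last one: verifying that the full-cap assignment is genuinely optimal for (\ref{relaxed_LP}) and that the budget closes integrally. This rests on the identity $\frac{k}{l}\binom{k-1}{l-1}=\binom{k}{l}$ yielding an integer count, so that no partially-loaded index is ever needed; a standard exchange argument (shifting weight from a smaller $\alpha_{l,L_i}$ to a larger, still-uncapped one never decreases the objective) then confirms that the greedy solution is optimal. The nonnegativity $\alpha_{l,L_i}\geq 0$ is what guarantees we want to exhaust the budget rather than leave part of it idle, and it is the only place where the structure of the $\alpha_{l,L_i}$ beyond their ordering enters the proof.
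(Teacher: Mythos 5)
Your proof is correct and takes essentially the same route as the paper: both exhibit the pick-$l$ bound as the optimum of the box-and-budget LP (\ref{optimized_coefficients_without_constraint}) and observe that (\ref{pickl_optimized_coefficients}) is a more tightly constrained maximization, so its value can only be smaller or equal. The only difference is that you explicitly verify the two claims the paper treats as ``easily seen'' --- the collapse of the $b=l$ constraints to the uniform cap $\gamma_i \leq \frac{1}{\binom{k-1}{l-1}}$, and the greedy/fractional-knapsack solution of the relaxed LP using the identity $\frac{k}{l}\binom{k-1}{l-1}=\binom{k}{l}$.
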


\begin{proof}
We can easily see that the following optimization problem (\ref{optimized_coefficients_without_constraint}) provides the same result as that from the pick-$l$-element algorithm:
\begin{align}
\label{optimized_coefficients_without_constraint}
        \max_{\gamma_i,\; 1 \leq i \leq \binom{n}{l} }  & \quad (\sum_{i=1}^{\binom{n}{l}} \gamma_i \; \alpha_{l,L_{i}}) & \nonumber \\
        \text{subject to}   & \quad 0 \leq \gamma_i \leq \frac{1}{{k-1 \choose l-1}},\; 1\leq i \leq \binom{n}{l}, & \nonumber \\
                            & \quad \sum_{i=1}^{\binom{n}{l}} \gamma_i \leq \frac{k}{l}. &
\end{align}
And this optimization problem (\ref{optimized_coefficients_without_constraint}) is a relaxation of the pick-$l$-element algorithm with optimized coefficients (\ref{pickl_optimized_coefficients}). Therefore, the pick-$l$-element algorithm with optimized coefficients provides tighter, or at least the same, upper bound than the pick-$l$-element algorithm.
\end{proof}

\begin{lemma}
\label{lemma3_various_coefficients}
The pick-$l$-element algorithm with optimized coefficients provide tighter, or at least the same, upper bounds than the pick-$a$-element algorithm with optimized coefficients when $l > a$.
\end{lemma}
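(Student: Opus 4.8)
The plan is to prove the monotonicity by a feasibility-transfer argument: I would take an optimal coefficient vector for the pick-$l$ problem (\ref{pickl_optimized_coefficients}) and use it to manufacture a \emph{feasible} coefficient vector for the pick-$a$ problem whose objective value is at least as large. Since both programs are maximizations whose optimal values are the respective upper bounds on $\alpha_k$, this would show that the pick-$a$ bound is no smaller than the pick-$l$ bound, which is exactly the claim (pick-$l$ is tighter or the same) when $l > a$.

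The first ingredient I would establish is a decomposition inequality relating $\alpha_{l,L}$ to the $\alpha_{a,J}$ over the $a$-subsets $J$ of $L$. Reusing the averaging identity from the proof of Lemma \ref{lemma3_1}, for any set $L$ with $|L|=l$ each coordinate of $L$ lies in exactly $\binom{l-1}{a-1}$ of its $a$-subsets, so $\|(Hx)_L\|_1 = \frac{1}{\binom{l-1}{a-1}}\sum_{J \subseteq L,\,|J|=a}\|(Hx)_J\|_1$ for every $x$; dividing by $\|Hx\|_1$, maximizing over $x$, and using that the maximum of a sum is at most the sum of the maxima gives
\[
\alpha_{l,L} \;\leq\; \frac{1}{\binom{l-1}{a-1}}\sum_{J \subseteq L,\,|J|=a}\alpha_{a,J}.
\]
Writing $\gamma^{(l)}$ for the optimal pick-$l$ coefficients, I would then define pick-$a$ coefficients by $\gamma^{(a)}_J = \frac{1}{\binom{l-1}{a-1}}\sum_{\{i:\,J\subseteq L_i\}}\gamma^{(l)}_i$ for each $a$-subset $J$. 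Substituting the decomposition inequality into the pick-$l$ objective and exchanging the order of summation shows $\sum_i \gamma^{(l)}_i\,\alpha_{l,L_i} \leq \sum_J \gamma^{(a)}_J\,\alpha_{a,J}$, so the objective does not decrease under this transfer.

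The remaining, and I expect most delicate, step is to check that $\gamma^{(a)}$ satisfies all constraints of (\ref{pickl_optimized_coefficients}) with $l$ replaced by $a$. Non-negativity is immediate. For the total-budget constraint I would count that each $L_i$ has $\binom{l}{a}$ many $a$-subsets, giving $\sum_J \gamma^{(a)}_J = \frac{\binom{l}{a}}{\binom{l-1}{a-1}}\sum_i \gamma^{(l)}_i = \frac{l}{a}\sum_i \gamma^{(l)}_i \leq \frac{l}{a}\cdot\frac{k}{l} = \frac{k}{a}$, using the pick-$l$ budget $\sum_i\gamma^{(l)}_i \leq k/l$. For the subset constraints, fix $b$ with $1 \leq b \leq a$ and a set $I$ with $|I|=b$; counting that an $a$-subset $J$ with $I \subseteq J \subseteq L_i$ is obtained by choosing its remaining $a-b$ elements from $L_i \setminus I$ yields $\sum_{\{J:\,I\subseteq J\}}\gamma^{(a)}_J = \frac{\binom{l-b}{a-b}}{\binom{l-1}{a-1}}\sum_{\{i:\,I\subseteq L_i\}}\gamma^{(l)}_i$.

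The crux is then a pure binomial-coefficient computation, which I expect to be the main obstacle. Applying the subset-of-a-subset identity $\binom{k-b}{l-b}\binom{l-b}{a-b} = \binom{k-b}{a-b}\binom{k-a}{l-a}$ in the numerator and the analogous $\binom{k-1}{l-1}\binom{l-1}{a-1} = \binom{k-1}{a-1}\binom{k-a}{l-a}$ in the denominator, the factor $\binom{k-a}{l-a}$ cancels, so the pick-$l$ subset bound $\frac{\binom{k-b}{l-b}}{\binom{k-1}{l-1}}$ scaled by $\frac{\binom{l-b}{a-b}}{\binom{l-1}{a-1}}$ collapses exactly to $\frac{\binom{k-b}{a-b}}{\binom{k-1}{a-1}}$. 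Combined with the pick-$l$ subset constraint $\sum_{\{i:\,I\subseteq L_i\}}\gamma^{(l)}_i \leq \frac{\binom{k-b}{l-b}}{\binom{k-1}{l-1}}$ (available since $b \leq a \leq l$), this gives $\sum_{\{J:\,I\subseteq J\}}\gamma^{(a)}_J \leq \frac{\binom{k-b}{a-b}}{\binom{k-1}{a-1}}$, establishing feasibility. Everything else reduces to the averaging argument already used in Lemma \ref{lemma3_1} together with straightforward counting, so once these two identities are in hand the conclusion $\text{(pick-$l$ bound)} \leq \text{(pick-$a$ bound)}$ follows by optimality of the pick-$a$ program.
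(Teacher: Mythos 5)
Your proposal is correct and follows essentially the same route as the paper's proof: the same decomposition inequality $\alpha_{l,L} \leq \frac{1}{\binom{l-1}{a-1}}\sum_{J\subseteq L,\,|J|=a}\alpha_{a,J}$ (the paper invokes its ``cheap upper bound'' lemma for this), the identical coefficient transfer $\gamma'_J = \frac{1}{\binom{l-1}{a-1}}\sum_{\{i:\,J\subseteq L_i\}}\gamma_i$, and the same budget and subset-constraint verifications, with your binomial-identity cancellation matching the paper's counting computation (the paper writes the counting factor as $\binom{n-b}{a-b}\binom{n-a}{l-a}/\binom{n-b}{l-b}$, which equals your $\binom{l-b}{a-b}$). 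The only cosmetic difference is framing: you phrase it as transferring an optimal pick-$l$ solution to a feasible pick-$a$ solution, while the paper phrases the same steps as a chain of relaxations of the optimization problems.
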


\begin{proof}
From Lemma \ref{lemma:cheapupper}, we have
\begin{align*}
        \alpha_{k,K}
        \leq
        \underbrace {\bigg( \frac{1}{{k-1 \choose l-1}} \bigg) \bigg( \sum_{ i=1 }^{k \choose l} \alpha_{l,L_{i}} \bigg) }_{ \substack{ \text{upper bound of $k$ elements} \\ \text{calculated with} \\ \text{the pick-$l$-element algorithm} }},\; 1\leq l \leq k,
\end{align*}
where $L_i$, $1\leq i \leq \binom{k}{l}$, are all the subsets of $K$ with cardinality $l$.
We can upper bound (\ref{pickl_optimized_coefficients}) by the following:
\begin{align}
\label{lemma3_optimized_coefficients_proof2}
        \max_{\gamma_i,\; 1 \leq j \leq \binom{n}{l}}                & \quad ( \sum_{i=1}^{\binom{n}{l}} \gamma_i \frac{1}{\binom{l-1}{a-1}} \sum_{ \{j : A_j \subset L_i, |A_j|=a \} }  \alpha_{a,A_{j}}) & \nonumber \\
        \text{subject to}   & \quad \gamma_i \geq 0,\; 1 \leq i \leq \binom{n}{l},  & \nonumber \\
                            & \quad \sum_{i=1}^{\binom{n}{l}} \gamma_i \leq \frac{k}{l},  & \nonumber \\
                            & \quad \sum_{ \{i: I \subseteq L_i,\; 1\leq i \leq \binom{n}{l} \} } \gamma_i  \leq \frac{\binom{k-b}{l-b}}{\binom{k-1}{l-1}}, & \nonumber \\
                            &\;\begin{subarray}{l} {for\;all\;integers\; b\;such\; that\; 1\leq b \leq l,} \\ {for\; all\; subsets\; I\; with\; |I|=b} \end{subarray}. &
\end{align}
(In the objective function of (\ref{lemma3_optimized_coefficients_proof2}), each $\alpha_{a,A_j},\; 1\leq j\leq \binom{n}{a}$ appears $\binom{n-a}{l-a}$ times.)
By defining $\frac{1}{\binom{l-1}{a-1}} \sum_{ \{i : A_j \subset L_i,\; 1\leq i \leq \binom{n}{l}  \} } \gamma_i$ as $\gamma_j^{'}$ and relaxing (\ref{lemma3_optimized_coefficients_proof2}),
we can obtain (\ref{lemma3_optimized_coefficients_proof3}) which is the same as the pick-$a$-element algorithm with optimized coefficients.
\begin{align}
\label{lemma3_optimized_coefficients_proof3}
        \max_{\gamma_j^{'},\; 1 \leq j \leq \binom{n}{a}}                & \quad ( \sum_{j=1}^{\binom{n}{a}} \gamma_j^{'} \; \alpha_{a,A_{j}}) & \nonumber \\
        \text{subject to}   & \quad \gamma_j^{'} \geq 0,\; 1 \leq j \leq \binom{n}{a},  & \nonumber \\
                            & \quad \sum_{j=1}^{\binom{n}{a}} \gamma_j^{'} \leq \frac{k}{a},  & \nonumber \\
                            & \quad \sum_{ \{j: I \subseteq A_j,\; 1\leq j \leq \binom{n}{a} \} } \gamma_j^{'}  \leq  \frac{ \binom{k-b}{a-b} }{ \binom{k-1}{a-1} }, & \nonumber \\
                            & \;\begin{subarray}{l} {for\;all\;integers\; b\;such\; that\; 1\leq b \leq a,} \\ {for\; all\; subsets\; I\; with\; |I|=b} \end{subarray}. &
\end{align}
In fact, the first, second, and third constraints of (\ref{lemma3_optimized_coefficients_proof3}) can be obtained from the relaxation of the constraints of (\ref{lemma3_optimized_coefficients_proof2}).
The first constraint of (\ref{lemma3_optimized_coefficients_proof3}) is trivial. The second constraint of (\ref{lemma3_optimized_coefficients_proof3}) is from the following:
\begin{align*}
         \sum_{j=1}^{\binom{n}{a}} \gamma_j^{'}
        & = \sum_{j=1}^{\binom{n}{a}} \frac{1}{\binom{l-1}{a-1}} \sum_{ \{i : A_j \subset L_i,\; 1\leq i \leq \binom{n}{l} \} } \gamma_i  \nonumber \\
        & = \frac{1}{\binom{l-1}{a-1}} \binom{l}{a} \sum_{i=1}^{\binom{n}{l}} \gamma_i  \nonumber \\
        & \leq \frac{1}{\binom{l-1}{a-1}} \binom{l}{a} \frac{k}{l} \nonumber \\
        & = \frac{k}{a}
\end{align*}
The third constraint of (\ref{lemma3_optimized_coefficients_proof3}) comes from the following:
\begin{align*}
        & \sum_{ \{ j: I \subseteq A_j,\; 1\leq j \leq \binom{n}{a}, \; |I|=b\} } \gamma_j^{'}\nonumber \\
        & = \sum_{ \{j: I \subseteq A_j,\; 1\leq j \leq \binom{n}{a}, \; |I|=b  \} } \frac{1}{\binom{l-1}{a-1}} \sum_{ \{i : A_j \subset L_i,\; 1\leq i \leq \binom{n}{l}  \} } \gamma_i   \nonumber \\
        & = \frac{1}{\binom{l-1}{a-1}}  \frac{ \binom{n-b}{a-b} \binom{n-a}{l-a} }{ \binom{n-b}{l-b} } \sum_{ \{i : I \subset L_i,\; 1\leq i \leq \binom{n}{l},\;|I|=b  \} } \gamma_i   \nonumber \\
        & \leq \frac{1}{\binom{l-1}{a-1}}  \frac{ \binom{n-b}{a-b} \binom{n-a}{l-a} }{ \binom{n-b}{l-b} }  \frac{\binom{k-b}{l-b}}{\binom{k-1}{l-1}}, \; 1 \leq b \leq a \nonumber \\
        & = \frac{ \binom{k-b}{a-b} }{ \binom{k-1}{a-1} }, \;1 \leq b \leq a
\end{align*}
Because (\ref{lemma3_optimized_coefficients_proof3}) is obtained from the relaxation of (\ref{lemma3_optimized_coefficients_proof2}), the optimal value of (\ref{lemma3_optimized_coefficients_proof3}) is larger or equal to the optimal value of (\ref{lemma3_optimized_coefficients_proof2}), and (\ref{lemma3_optimized_coefficients_proof3}) is nothing but the pick-$a$-element algorithm with optimized coefficients. Therefore, the pick-$l$-element algorithm provides tighter, or at least the same, upper bounds than the pick-$a$-element algorithm with optimized coefficients, when $l > a$.
\end{proof}

\section{The Sandwiching Algorithm}
\label{Sec4}
From Section \ref{Sec2}, Section \ref{Sec3} and Section \ref{Sec3-1}, we have upper bounds on $\alpha_k$ with the pick-$l$-element algorithm, $ 1\leq l \leq k$:
\begin{align*}
        \alpha_k
        \leq
        \underbrace{  \bigg( \frac{1}{{k-1 \choose l-1}} \bigg) \bigg( \sum_{ j=1 }^{k \choose l} \alpha_{l,L_{i_j}} \bigg) }_{  \substack{ \text{upper bound calculated  with}\\ \text{the pick-$l$-element algorithm}} },
\end{align*}
or the pick-$l$-element algorithm with optimized coefficients, $ 1\leq l \leq k$.
However, these algorithms do not provide the \emph{exact} value for $\alpha_k$.
In order to obtain the exact value, rather than upper bounds on $\alpha_k$, we devise a sandwiching algorithm with greatly reduced computational complexity. We remark that the convex programming methods in \cite{d2011testing} and \cite{juditsky2011verifiable} only provide upper bounds on $\alpha_k$, instead of exact values of $\alpha_k$, except when $k=1$.

The idea of our sandwiching algorithm is to maintain two bounds in computing the exact value of $\alpha_k$: an upper bound on $\alpha_k$, and a lower bound on $\alpha_k$. In algorithm execution, we constantly decrease the upper bound, and increase the lower bound. When the lower bound and upper bound meet, we immediately get a certification that the exact value of $\alpha_k$ has been reached.  There are two ways to compute the upper bounds: the `cheap' upper bound and the linear programming based upper bound. These two upper bounds are stated in Lemmas \ref{lemma:cheapupper} and \ref{lemma:lpbasedupper} respectively.

\begin{lemma}[`cheap' upper bound]
\label{lemma:cheapupper}
 Given a set $K$ with cardinality $k$, we have
\begin{align}
\label{eq:cheapbound}
        \alpha_{k,K}
        \leq
        \underbrace {\bigg( \frac{1}{{k-1 \choose l-1}} \bigg) \bigg( \sum_{ i=1 }^{k \choose l} \alpha_{l,L_{i}} \bigg) }_{ \substack{ \text{upper bound of $k$ elements} \\ \text{calculated with} \\ \text{the pick-$l$-element algorithm} }},\; 1\leq l \leq k,
\end{align}
where $\alpha_{k,K} = \frac{{\beta}_{k,K}}{1+{\beta}_{k,K}}$ and ${\beta}_{k,K}$ is defined as below, and $L_i$, $1\leq i \leq \binom{k}{l}$, are all the subsets of $K$ with cardinality $l$.

\begin{align}
\label{eq:beta_kK}
          {\beta}_{k,K} = \max_{ x \in \mathbf{R}^{m} }  & \quad \| (Hx)_{K} \|_{1}        & \nonumber \\
                          \text{subject to}              & \quad \| (Hx)_{\overline{K}} \|_{1} \leq 1 &
\end{align}
(${\beta}_{k,K}$ is defined for a given $K$ set with cardinality $k$, but ${\beta}_k$ is the maximum value over all subsets with cardinality $k$.)
\end{lemma}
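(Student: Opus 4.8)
The plan is to mirror the argument in the proof of Lemma~\ref{lemma3_1}, but specialized to the \emph{given fixed} set $K$ rather than the maximizing set, and with the final sorting step removed. First I would record the ratio forms of the quantities involved. Exactly as established in the introduction for $\alpha_k$ and $\beta_k$, the same scale-invariance argument specializes to a fixed $K$: normalizing $\|(Hx)_{\overline{K}}\|_1 = 1$ gives $\|(Hx)\|_1 = 1 + \|(Hx)_K\|_1$, so maximizing the ratio $\frac{\|(Hx)_K\|_1}{\|(Hx)\|_1}$ is equivalent to maximizing $\|(Hx)_K\|_1$ under the constraint of (\ref{eq:beta_kK}). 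Hence $\alpha_{k,K} = \max_{x \in \mathbf R^m} \frac{\|(Hx)_K\|_1}{\|(Hx)\|_1} = \frac{\beta_{k,K}}{1+\beta_{k,K}}$, and likewise $\alpha_{l,L_i} = \max_{x \in \mathbf R^m} \frac{\|(Hx)_{L_i}\|_1}{\|(Hx)\|_1}$ for each $l$-subset $L_i$ of $K$.

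The combinatorial heart of the proof is a counting identity. Let $L_1,\dots,L_{\binom{k}{l}}$ enumerate all $l$-subsets of $K$. For each fixed index $j \in K$, the number of these subsets containing $j$ equals the number of ways to choose the remaining $l-1$ members from the other $k-1$ elements of $K$, namely $\binom{k-1}{l-1}$. Therefore, for every $x \in \mathbf R^m$, summing $\|(Hx)_{L_i}\|_1$ over all $i$ counts each coordinate magnitude $|(Hx)_j|$, $j \in K$, exactly $\binom{k-1}{l-1}$ times, which yields the exact identity
$$\sum_{i=1}^{\binom{k}{l}} \|(Hx)_{L_i}\|_1 = \binom{k-1}{l-1}\, \|(Hx)_K\|_1 .$$
Dividing by $\binom{k-1}{l-1}\|(Hx)\|_1$ then gives the decomposition
$$\frac{\|(Hx)_K\|_1}{\|(Hx)\|_1} = \frac{1}{\binom{k-1}{l-1}} \sum_{i=1}^{\binom{k}{l}} \frac{\|(Hx)_{L_i}\|_1}{\|(Hx)\|_1},$$
valid for every $x$.

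To conclude, I would relax each summand on the right by its maximum over $x$, i.e. $\frac{\|(Hx)_{L_i}\|_1}{\|(Hx)\|_1} \leq \alpha_{l,L_i}$, obtaining for every $x$ the bound $\frac{\|(Hx)_K\|_1}{\|(Hx)\|_1} \leq \frac{1}{\binom{k-1}{l-1}} \sum_{i=1}^{\binom{k}{l}} \alpha_{l,L_i}$. Since the right-hand side is independent of $x$, taking the supremum over $x$ on the left and invoking the ratio form of $\alpha_{k,K}$ from the first step delivers (\ref{eq:cheapbound}). The main point to get right is the counting identity and the resulting exact decomposition; the per-term step is a routine relaxation. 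Unlike Lemma~\ref{lemma3_1}, no sorting is required here, because we use precisely the $l$-subsets of the fixed $K$ rather than the globally largest $\alpha_{l,L_i}$ values --- which is exactly what makes this bound ``cheap'' once $K$ is specified.
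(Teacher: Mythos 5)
Your proposal is correct and follows essentially the same route as the paper's own proof: the counting identity that each element of $K$ lies in $\binom{k-1}{l-1}$ of the $l$-subsets $L_i$, giving the exact decomposition $\frac{ \| (Hx)_{K} \|_{1} }{ \| (Hx) \|_{1} } = \frac{1}{\binom{k-1}{l-1}} \sum_{i=1}^{\binom{k}{l}} \frac{ \| (Hx)_{ L_i } \|_{1} }{ \| (Hx) \|_{1} }$, followed by relaxing each summand to its maximum over $x$. Your only additions --- spelling out the normalization argument for $\alpha_{k,K} = \max_{x} \frac{\|(Hx)_K\|_1}{\|(Hx)\|_1}$ and noting that no sorting is needed --- are points the paper leaves implicit, not a different method.
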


\begin{proof}
This proof follows the same reasoning as in Lemma \ref{lemma3_1}. Let $L_{i}$, $1\leq i \leq \binom{n}{l}$, be the family of subsets of $K$, with cardinality $l$. It is not hard to see that each element of $K$ appears in ${ k-1 \choose l-1 }$ such subsets. In particular, we have
\begin{align*}
    K = \bigcup_{i=1}^{ k \choose l } L_i.
\end{align*}
Thus, $\forall x \in \mathbf R^m$, we can represent $\frac{ \| (Hx)_{K} \|_{1} }{ \| (Hx) \|_{1} }$ as follows.
\begin{align}
\label{Sec4_eqn1}
       \frac{ \| (Hx)_{K} \|_{1} }{ \| (Hx) \|_{1} } =
       \bigg( \frac{1}{{k-1 \choose l-1}} \bigg) \bigg( \sum_{i=1}^{k \choose l} \frac{ \| (Hx)_{ L_i  } \|_{1} }{ \| (Hx) \|_{1} } \bigg)
\end{align}
Suppose that each term of the right-hand side of (\ref{Sec4_eqn1}), $\frac{ \| (Hx)_{ L_i  } \|_{1} }{ \| (Hx) \|_{1} }$, achieves the maximum value when $x = x^{i}, \;\; i = 1,...,{k \choose l}$; and the maximum value of $\frac{\|(Hx)_{K}\|_{1}}{\|Hx\|_1}$ in (\ref{Sec4_eqn1})  is achieved when $x = x^{*}$.
Then, we have
\begin{align}
       \frac{ \| (Hx^{*})_{K} \|_{1} }{ \| (Hx^{*}) \|_{1} }
       & = \bigg( \frac{1}{{k-1 \choose l-1}} \bigg) \bigg( \sum_{i=1}^{k \choose l} \frac{ \| (Hx^{*})_{ L_i  } \|_{1} }{ \| (Hx^{*}) \|_{1} } \bigg) \nonumber \\
       & \leq \bigg( \frac{1}{{k-1 \choose l-1}} \bigg) \bigg( \sum_{i=1}^{k \choose l} \frac{ \| (Hx^{i})_{ L_i  } \|_{1} }{ \| (Hx^{i}) \|_{1} } \bigg). \nonumber
\end{align}

\end{proof}

We can also obtain the upper bound on $\alpha_{k,K}$ on a given $K$ set by solving the following optimization problem (\ref{eq:LPbasedupperboundsfor_asingleK}):
\begin{align}
\label{eq:LPbasedupperboundsfor_asingleK}
          \max               & \quad (\sum_{j=1}^{k} z_j  )   &  \nonumber \\
          \text{subject to}
          & \quad  \sum_{ \substack{ {t \in L_i}  } } z_t \leq \alpha_{l,L_i},\; \begin{subarray}{l}{\forall  L_i \subseteq K} \\ {\text{with}\; |L_i|=l, \;i=1,...,{k \choose l} }\end{subarray},&  \nonumber \\
          & \quad \; z_j \geq 0,\; j=1,2,...,k.             &
\end{align}

\begin{lemma}[linear programming based upper bound]
\label{lemma:lpbasedupper}
The optimal objective value of (\ref{eq:LPbasedupperboundsfor_asingleK}) is an upper bound on $\alpha_{k,K}$.
\end{lemma}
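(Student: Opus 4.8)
The plan is to exhibit an explicit feasible point of the linear program (\ref{eq:LPbasedupperboundsfor_asingleK}) whose objective value equals $\alpha_{k,K}$; since (\ref{eq:LPbasedupperboundsfor_asingleK}) is a maximization, its optimal value is then at least $\alpha_{k,K}$, which is precisely the claimed upper bound. First I would relabel the indices of $K$ as $\{1,2,\dots,k\}$, so that the decision variables $z_1,\dots,z_k$ correspond to the coordinates indexed by $K$. Then I would pick a maximizer $x^{*}$ of the ratio form of $\alpha_{k,K}$, that is, a vector achieving $\alpha_{k,K} = \max_{x}\frac{\|(Hx)_{K}\|_1}{\|(Hx)\|_1}$. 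Because this ratio is scale-invariant and strictly positive in the nondegenerate case, any genuine maximizer satisfies $\|Hx^{*}\|_1 > 0$ (a vector with $Hx^{*}=0$ would make the ratio $0/0$ and cannot be the optimizer), so dividing by $\|Hx^{*}\|_1$ is legitimate.

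Next I would set $z_j = |(Hx^{*})_j|/\|Hx^{*}\|_1$ for each $j \in K$. With this choice the non-negativity constraints $z_j \geq 0$ hold trivially, and the objective evaluates to $\sum_{j=1}^{k} z_j = \|(Hx^{*})_{K}\|_1/\|Hx^{*}\|_1 = \alpha_{k,K}$, matching the desired value exactly.

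Then I would verify the cardinality-$l$ constraints. For any subset $L_i \subseteq K$ with $|L_i| = l$, summing the corresponding variables gives $\sum_{t \in L_i} z_t = \|(Hx^{*})_{L_i}\|_1/\|Hx^{*}\|_1$. By the definition of $\alpha_{l,L_i}$ as the maximum of $\frac{\|(Hx)_{L_i}\|_1}{\|(Hx)\|_1}$ over all $x$, this quantity evaluated at the particular vector $x^{*}$ cannot exceed $\alpha_{l,L_i}$; hence every constraint $\sum_{t \in L_i} z_t \leq \alpha_{l,L_i}$ is satisfied, and the constructed $z$ is feasible for (\ref{eq:LPbasedupperboundsfor_asingleK}). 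Feasibility together with the objective computation shows that the LP optimum is at least $\alpha_{k,K}$, which establishes the lemma.

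The argument carries no serious obstacle; the only point requiring a little care is the normalization step, namely confirming that the maximizer $x^{*}$ has $\|Hx^{*}\|_1 > 0$ so that the single-coordinate contributions are well defined. The conceptual core is simply recognizing that the per-coordinate values $|(Hx^{*})_j|/\|Hx^{*}\|_1$ of the optimizer furnish a feasible assignment of the $z_j$, so that the linear program—being a relaxation that only constrains sums over $l$-subsets—is guaranteed to dominate $\alpha_{k,K}$.
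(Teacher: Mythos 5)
Your proof is correct and follows essentially the same route as the paper: both arguments rest on the substitution $z_t = \|(Hx)_{\{t\}}\|_1/\|(Hx)\|_1$ and on the fact that the $l$-subset constraints hold automatically because each $\alpha_{l,L_i}$ is defined as a maximum over all $x$. The paper phrases this as relaxing an $x$-optimization with redundant constraints into the LP, while you instantiate the same mapping at the maximizer $x^{*}$ to exhibit a feasible point with objective value $\alpha_{k,K}$; these are two wordings of the identical argument.
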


\begin{proof}
By the definition of ${\beta}_{k,K}$, we can write $\alpha_{k,K} (= \frac{\beta_{k,K}}{1+{\beta}_{k,K}})$ as the optimal objective value of the following optimization problem.
\begin{align}
          \max_{ x \in \mathbf{R}^{m} }  & \quad \frac{ \| (Hx)_{K} \|_{1} }{\| (Hx) \|_{1}}  &\nonumber \\
                \text{subject to}              & \quad \frac{ \| (Hx)_{L_i} \|_{1} }{\| (Hx) \|_{1}} \leq   \frac{ {\beta}_{l,L_i}}{1+{\beta}_{l,L_i}},\; i=1,...,{k \choose l}.   &
                \label{eq:withredundantconstraints}
\end{align}
This is because, by the definition of ${\beta}_{l,L_i}$, the newly added constraints $\frac{ \| (Hx)_{L_i} \|_{1} }{\| (Hx) \|_{1}} \leq   \frac{ {\beta}_{l,L_i}}{1+{\beta}_{l,L_i}},\; i=1,...,{k \choose l},$ are just redundant constraints which always hold true over $x\in \mathbf{R}^{m}$. Representing $z_t=\frac{ \| (Hx)_{\{t\}} \|_{1} }{\| (Hx) \|_{1}}$, $t=1,...,n$, we can relax (\ref{eq:withredundantconstraints}) to (\ref{eq:LPbasedupperboundsfor_asingleK}). Thus the optimal objective value of (\ref{eq:LPbasedupperboundsfor_asingleK}) is an upper bound on that of (\ref{eq:withredundantconstraints}), namely $\alpha_{k,K}$.

\end{proof}

\begin{lemma}
 The optimal objective value from (\ref{eq:LPbasedupperboundsfor_asingleK}) is no larger than that of (\ref{eq:cheapbound}).
\end{lemma}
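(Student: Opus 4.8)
The plan is to exploit the fact that the right-hand side of (\ref{eq:cheapbound}) is a fixed scalar, whereas the left-hand side---the linear-programming based bound---is the \emph{optimum} of the program (\ref{eq:LPbasedupperboundsfor_asingleK}). Hence it suffices to show that \emph{every} feasible point $z$ of (\ref{eq:LPbasedupperboundsfor_asingleK}) already has objective value at most the cheap bound; the optimum then inherits the same inequality. I would first note that, after identifying the $k$ elements of the fixed set $K$ with the indices $1,\dots,k$, the objective $\sum_{j=1}^{k} z_j$ is exactly the total mass $\sum_{t\in K} z_t$, and the constraints range over all $\binom{k}{l}$ subsets $L_i\subseteq K$ with $|L_i|=l$.

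The core step is to add up all $\binom{k}{l}$ primal constraints $\sum_{t\in L_i} z_t \le \alpha_{l,L_i}$ with equal weight. On the right this yields $\sum_{i=1}^{\binom{k}{l}} \alpha_{l,L_i}$. On the left I invoke the same counting fact used in the proof of Lemma~\ref{lemma:cheapupper}: each index $t\in K$ lies in exactly $\binom{k-1}{l-1}$ of the $l$-subsets of $K$. Therefore each $z_t$ is tallied $\binom{k-1}{l-1}$ times, so the left-hand side collapses to $\binom{k-1}{l-1}\sum_{j=1}^{k} z_j$. Dividing through by $\binom{k-1}{l-1}$ gives $\sum_{j=1}^{k} z_j \le \frac{1}{\binom{k-1}{l-1}}\sum_{i=1}^{\binom{k}{l}} \alpha_{l,L_i}$, which is precisely the quantity in (\ref{eq:cheapbound}). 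Since this holds for every feasible $z$, it holds for the maximizer, completing the argument.

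An equivalent and perhaps cleaner phrasing goes through LP duality: set every dual variable to the constant $1/\binom{k-1}{l-1}$. The same counting identity shows this vector satisfies each covering constraint $\sum_{\{i:\, t\in L_i\}} \gamma_i \ge 1$ with equality and is nonnegative, hence dual-feasible; its dual objective $\frac{1}{\binom{k-1}{l-1}}\sum_{i} \alpha_{l,L_i}$ equals the cheap bound. Weak duality then delivers the inequality at once. Either route makes transparent why the coefficient $1/\binom{k-1}{l-1}$ appears: it is exactly the uniform multiplier that reconstructs the cheap bound from the LP constraints.

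I do not anticipate a genuine obstacle here. The only nontrivial ingredient is the combinatorial identity that a fixed element of $K$ belongs to $\binom{k-1}{l-1}$ of its $l$-subsets, and this is already established in the excerpt. The single point to state carefully is the indexing convention---that $z_1,\dots,z_k$ are attached to the $k$ members of $K$---so that summing the constraints over all $l$-subsets of $K$ indeed hits each variable the same number of times; without that bookkeeping the weighting argument would not cleanly reproduce the factor $\binom{k-1}{l-1}$.
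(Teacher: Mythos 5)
Your proposal is correct and is essentially the paper's own proof: the paper likewise sums all $\binom{k}{l}$ constraints of (\ref{eq:LPbasedupperboundsfor_asingleK}) and uses the fact that each element of $K$ appears in $\binom{k-1}{l-1}$ of the $l$-subsets to recover the cheap bound, which then dominates the LP optimum. Your LP-duality rephrasing with uniform dual weights $1/\binom{k-1}{l-1}$ is just the same counting argument in dual clothing, so there is nothing substantively different to compare.
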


\begin{proof}
Summing up the constraints in (\ref{eq:LPbasedupperboundsfor_asingleK})
\begin{align}
          & \sum_{ \substack{ {t \in L_i}  } } z_t \leq \underbrace{\alpha_{l,L_i}}_{\substack{ \text{Exact value of $l$ elements} } } ,\; \begin{subarray}{l} {\forall  L_i \subseteq K} \\ {\text{with}\; |L_i|=l, \;i=1,...,{k \choose l} } \end{subarray},&  \nonumber
\end{align}
we get the number in (\ref{eq:cheapbound}), since each element in $K$ appears in $\binom{k-1}{l-1}$ subsets of cardinality of $l$.
\end{proof}


\begin{algorithm}
\LinesNumbered
  \caption{Sandwiching Algorithm for computing exact value of $\alpha_k$ in Pseudo code}
  \label{sand_algo_pseudo}
  \SetAlgoLined
    {\scriptsize \tcc {Global Upper Bound: GUB } }
    {\scriptsize \tcc {Global Lower Bound: GLB } }
    {\scriptsize \tcc {Cheap Upper Bound: CUB } }
    {\scriptsize \tcc {Linear Programming based Upper Bound: LPUB } }
    {\scriptsize \tcc {Local Lower Bound: LLB } }
    \BlankLine
{\footnotesize
   \KwIn{ Sets $L_j$ with $|L_j|=l$, and $\alpha_{l,L_j}$ corresponding to $L_j,\;j=1,...,{n \choose l}$}
   Compute the CUB on $\alpha_{k,K}$ for all the subsets $K$ with $|K|=k$. \par
   Sort these subsets in descending order of their CUB \par
   Initialize GLB $\leftarrow 0$ \par
   \For { $i=1$  \KwTo $n \choose k$ } {
        \uIf { GLB  $<$ the CUB of the $i$-th sorted subset $K_i$ } {
            GUB $\leftarrow$ the CUB of the $i$-th sorted subset $K_i$ \par
            LPUB $\leftarrow$ the upper bound on $\alpha_{k,K_i}$ from (\ref{eq:LPbasedupperboundsfor_asingleK}) for the $i$-th subset $K_i$ \par
            \If { GLB $<$  LPUB } {
                LLB $\leftarrow$ $\alpha_{k, K_i}$, after computing $\alpha_{k,K_i}$.   \par
                \If { LLB $>$ GLB} {
                    GLB $\leftarrow$ LLB \par
                }
            }
        }
        \Else{
            GUB $\leftarrow$ GLB \par
            break
        }
   }

   GUB$\leftarrow$ GLB; \par
   \uIf { GUB $< \frac{1}{2}$ } {
        NSC is satisfied
  }
   \Else{
        NSC is not satisfied
  }
}%
\end{algorithm}


%
\begin{algorithm}
  \caption{Sandwiching Algorithm for computing exact value of $\alpha_k$ in description}
    \label{sand_algo_description}
    \nlset{*}{\footnotesize \textbf{Global Upper Bound (GUB): the current upper bound $\alpha_{k}$} }\par
    \nlset{*}{\footnotesize \textbf{Global Lower Bound (GLB): the current lower bound for $\alpha_{k}$ } } \par
    \nlset{*}{\footnotesize \textbf{Cheap Upper Bound (CUB): the upper bounds obtained from (\ref{eq:cheapbound})} } \par
    \nlset{*}{\footnotesize \textbf{Linear Programming based Upper Bound (LPUB): the upper bounds obtained from (\ref{eq:LPbasedupperboundsfor_asingleK})} } \par
    \BlankLine
    \BlankLine

    \nlset{1} For a fixed number $l< k$, compute $\alpha_{l,L_j},\;j=1,...,{n \choose l}$, for all the subsets $L_j$ with $|L_j|=l$. Compute the cheap upper bounds on $\alpha_{k,K}$ for all subsets $K$ with $|K|=k$, and sort these subsets by their cheap upper bounds in descending order. \par
    \nlset{2} Initialize GLB $\leftarrow$ $0$ and the index $i$ $\leftarrow$ $1$. \par
    \nlset{3} If $i$ = ${n \choose k} + 1$, then assign GUB $\leftarrow$ GLB and go to Step 7. If the CUB of the $i$-th sorted subset is no bigger than than GLB, then assign GUB $\leftarrow$ GLB and go to Step 7. \par
    \nlset{4} Assign GUB $\leftarrow$ the CUB of the $i$-th sorted subset $K_i$, and compute the LPUB for this subset $K_i$. \par
    \nlset{5} If the LPUB of the $i$-th subset $K_i$ is bigger than GLB, then calculate the exact $\alpha_{k,K_i}$ by solving (\ref{eq:beta_kK}) and assign GLB $\leftarrow$ $\alpha_{k,K_i}$ only if $\alpha_{k,K_i}>$ GLB.  \par
    \nlset{6} Increase the index $i$ $\leftarrow$ $(i+1)$ and go to Step 3. \par
    \nlset{7} If GUB is smaller than $\frac{1}{2}$, the null space condition is satisfied. If not, the null space condition is not satisfied. \par
\end{algorithm}

In Algorithms \ref{sand_algo_pseudo} and \ref{sand_algo_description}, we shows how we implemented the sandwiching algorithm.
The following theorem claims that Algorithms \ref{sand_algo_pseudo} and \ref{sand_algo_description} will output the exact value of $\alpha_{k}$ in a finite number of steps.
\begin{theorem}
 The global lower and upper bounds on $\alpha_{k}$  will both converge to $\alpha_{k}$ in a finite number of steps.
\end{theorem}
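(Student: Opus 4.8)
The plan is to establish two facts: that Algorithm \ref{sand_algo_description} halts after finitely many iterations, and that at halting both the global lower bound GLB and global upper bound GUB equal $\alpha_k$. The foundation is the identity $\alpha_k = \max_{\{K:|K|=k\}} \alpha_{k,K}$, which is immediate from the definition of $\alpha_k$ together with $\alpha_{k,K}=\frac{\beta_{k,K}}{1+\beta_{k,K}}$; so the entire task reduces to showing the algorithm correctly locates this maximum. Finiteness is the easy part: the loop index $i$ runs from $1$ to $\binom{n}{k}$ and is incremented once per pass, the only other exit being the \texttt{break} in Step 3, so there are at most $\binom{n}{k}$ iterations, each consisting of finitely many polynomial-size linear programs and one evaluation of $\beta_{k,K_i}$.

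Next I would track two invariants. The first is $\mathrm{GLB} \leq \alpha_k$ at every step: GLB is initialized to $0$ and is only ever updated to some exact value $\alpha_{k,K_i}$, and each such value satisfies $\alpha_{k,K_i} \leq \max_K \alpha_{k,K} = \alpha_k$. The second, and more delicate, invariant concerns the \textbf{correctness of pruning}: every subset the algorithm discards without recording its exact value has $\alpha_{k,K} \leq \mathrm{GLB}$, hence cannot be the maximizer. There are two discard mechanisms. The \texttt{break} fires when the CUB of the current subset $K_i$ satisfies $\mathrm{CUB}(K_i) \leq \mathrm{GLB}$; because the subsets were sorted in descending order of their CUBs, every not-yet-examined subset $K_j$ ($j \geq i$) obeys $\mathrm{CUB}(K_j) \leq \mathrm{CUB}(K_i) \leq \mathrm{GLB}$, and Lemma \ref{lemma:cheapupper} gives $\alpha_{k,K_j} \leq \mathrm{CUB}(K_j) \leq \mathrm{GLB}$. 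The skip in Step 5 happens when $\mathrm{LPUB}(K_i) \leq \mathrm{GLB}$, and Lemma \ref{lemma:lpbasedupper} gives $\alpha_{k,K_i} \leq \mathrm{LPUB}(K_i) \leq \mathrm{GLB}$. In either case the discarded value is dominated by GLB.

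With these invariants in hand, I would finish by letting $K^{*}$ be a maximizer, so $\alpha_{k,K^{*}} = \alpha_k$. If the algorithm computes the exact value of $K^{*}$, then $\mathrm{GLB} \geq \alpha_{k,K^{*}} = \alpha_k$. Otherwise $K^{*}$ was discarded by one of the two pruning rules, whence $\alpha_k = \alpha_{k,K^{*}} \leq \mathrm{GLB}$ by the pruning invariant. Either way $\mathrm{GLB} \geq \alpha_k$, which together with the first invariant forces $\mathrm{GLB} = \alpha_k$ at halting. Since the algorithm's final action is the assignment $\mathrm{GUB} \leftarrow \mathrm{GLB}$, we also obtain $\mathrm{GUB} = \alpha_k$, completing the argument.

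The step I expect to be the main obstacle is making the pruning invariant airtight — in particular, justifying the early \texttt{break} by coupling the descending-CUB ordering of the subsets with the upper-bound property of Lemma \ref{lemma:cheapupper}. This monotonicity is exactly what certifies that once the largest remaining CUB has fallen to or below the current GLB, no unexamined subset can possibly exceed it, so the search may stop while still returning the exact $\alpha_k$. Verifying that GLB is never inadvertently raised above $\alpha_k$ by the LPUB-based skip (which relies on Lemma \ref{lemma:lpbasedupper}) is a secondary point that must be checked, but is routine once the two upper-bound lemmas are invoked.
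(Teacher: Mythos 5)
Your proof is correct and follows essentially the same route as the paper's: both arguments rest on the monotonicity of GLB, the descending-CUB ordering to justify the early \texttt{break} (via Lemma \ref{lemma:cheapupper}), and Lemma \ref{lemma:lpbasedupper} to justify the LPUB-based skip, concluding that when the bounds meet (or the list is exhausted) GLB equals $\max_{K}\alpha_{k,K}=\alpha_k$. If anything, your explicit pruning invariant handles the skipped-subset case slightly more carefully than the paper's prose, but the underlying argument is identical.
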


\begin{proof}
In the sandwiching algorithm, we first use the pick-$l$-element algorithm to calculate the values of ${\alpha}_{l,L}$ for every subset $L$ with cardinality $l$.
Then using the `cheap' upper bound (\ref{eq:cheapbound}), we calculate the upper bounds on ${\alpha}_{k,K}$ for every set $K$ with cardinality $k$.
We then sort these subsets in descending order by their upper bounds.

In algorithm execution, because of sorting, the global upper bound GUB on ${\alpha}_{k}$ never rises.
In the meanwhile, the global lower bound GLB either rises or stays unchanged in each iteration.
If the algorithm comes to an index $i$ , $1\leq i \leq \binom{n}{k}$, such that the upper bound of $\alpha_{k,K_i}$ for the $i$-th subset $K_i$ is already smaller than the global lower bound GLB,
the algorithm will make the global upper bound GUB equal to the global lower bound GLB. At that moment, we know they must both be equal to ${\alpha}_{k}$. This is because, from the descending order of the upper bounds on ${\alpha}_{k,K}$, each subset $K_j$ with $j>i$ must have an $\alpha_{k,K_j}$ that is smaller than the global lower bound GLB. In the meanwhile, as specified by the sandwiching algorithm, the global lower bound GLB is the largest among $\alpha_{k,K_j}$ with $1\leq j \leq (i-1)$. So at this point, the GLB must be the largest among $\alpha_{k,K_j}$ with $1\leq j \leq \binom{n}{k}$, namely GLB$=\alpha_k$.

If we can not find such an index $i$, the algorithm will end up calculating ${\alpha}_{k,K}$ for every set $K$ in the list. In this case, the upper and lower bound will also become equal to $\alpha_{k}$, after each ${\alpha}_{k,K}$ has been calculated.

\end{proof}

\subsection{Calculating ${\alpha}_{k,K}$ for a set $K$}
The exact value of ${\alpha}_{k,K} (= \frac{\beta_{k,K}}{1+\beta_{k,K}})$ is calculated by solving (\ref{eq:beta_kK}) for a subset $K$.
However, the objective function is not concave.
In order to solve it, we separate the $\ell_1$ norm of $(Hx)_K$ into $2^{k}$ possible cases according to the sign of each term, $+1$ or $-1$.
Hence, we can make a $\ell_1$ optimization problem into $2^k$ small linear problems.
For each possible case, we find the maximum candidate value for $\beta_{k,K}$ via the following:
\begin{align}
\label{Sec4_eqn2}
                          \max_{ x \in \mathbf{R}^{m} }  & \quad  \sum_{\substack{ i\in K \\ sign_i\in\{-1,1\}}} sign_i \times (Hx)_{\{i\}}   &  \nonumber \\
                          \text{subject to}              & \quad \| (Hx)_{\overline{K}} \|_{1} \leq 1, &
\end{align}
where $sign_i$ is for the sign of $i$-th term.
In fact, we do not need to calculate $2^{k}$ small linear problems. We only need to calculate $2^{k-1}$ problems instead of $2^{k}$,
because the result from (\ref{Sec4_eqn2}) for one possible case (e.g. 1,-1,-1,1 when $k=4$) out of $2^{k}$ cases is always equivalent to the result for its inverse case (e.g. -1,1,1,-1).
Among the $2^{k-1}$ candidates, we choose the biggest one as $\beta_{k,K}$. This strategy is also applied to solve (\ref{Sec2_eqn1}) and (\ref{Sec3_eqn1}).

\subsection{Computational Complexity}
The sandwiching algorithm consists of three major parts. The first part performs the pick-$l$-element algorithm for a fixed number $l$.
The second part is the complexity of computing the upper bounds on $\alpha_{k,K}$, and sorting the $\binom{n}{k}$ subsets $K$ by the upper bounds on $\alpha_{k,K}$ in descending order.
The third part is to exactly compute $\alpha_{k,K}$ for each subset $K$, starting from the top of the sorted list,
before the upper bound meets the lower bound in the algorithm.

The first part of the sandwiching algorithm can be finished with polynomial-time complexity, when the number $l$ is fixed. The complexity of the second part grows exponentially in $n$;
however, computing the upper bounds based on the pick-$l$-element algorithms, and ranking the upper bounds are very cheap in computation.
So when $n$ and $k$ are not big (for example, $n=40$ and $k=5$), this second step can also be finished reasonably fast. We remark that, however, when $n$ and $k$ are big, one may enumerate these branches one by one sequentially, instead of computing and ranking them in one shot (A detailed discussion of this is out of the scope of this current paper). The main complexity then comes from the third part, which depends heavily on, for how many subsets $K$ the algorithm will exactly compute $\alpha_{k,K}$,
before the upper bound and the lower bound meet. In turn, this depends on how tight the upper bound and lower bound are in algorithm execution.

In the worst case, the upper and lower bound can meet when $\binom{n}{k}$ subsets $K$ have been examined.
However, in practice, we find that, very often, the upper bounds and the lower bounds meet very quickly,
often way before the algorithm has to examine $\binom{n}{k}$ subsets.
Thus the algorithm will output the exact value of $\alpha_k$, by using much lower computational complexity than the exhaustive search method.
Intuitively, subsets with bigger upper bounds on $\alpha_{k,K}$ also tend to offer bigger exact values of $\alpha_{k,K}$.
This in turn leads to very tight lower bounds on $\alpha_k$.
As we go down the sorted list of subsets, the lower bound becomes tighter and tighter, while the upper bound also becomes tighter and tighter,
since the upper bounds were sorted in descending order.
Thus the lower and upper bounds can become equal very quickly.
In the extreme case, if both upper and lower bounds are tight at the beginning, the sandwich algorithm will be terminated at the very first step.
To analyze how quickly the upper and lower bound meet in this algorithm is a very interesting problem.

\section{Simulation Results}
\label{Sec5}

We conducted simulations using Matlab on a HP Z220 CMT workstation with Intel Corei7-3770 dual core CPU @ 3.4GHz clock speed
and 16GB DDR3 RAM, under Windows 7 OS environment.
To solve optimization problems such as (\ref{Sec2_eqn1}), (\ref{Sec3_eqn1}), (\ref{eq:beta_kK}), and (\ref{eq:LPbasedupperboundsfor_asingleK}),
we used CVX,
a package for specifying and solving convex programs \cite{cvx}.

Tables ranging from \ref{Table_Gaussian_pick1} to \ref{Table_Gaussian_102x256} are the results for Gaussian matrix cases.
Gaussian matrix $H$ was chosen randomly and simulated for various $k$ from 1 to 5.
The elements of $H$ matrix follow i.i.d. standard Gaussian distribution $\mathcal{N}(0,1)$.

Table \ref{Table_Gaussian_pick1}, \ref{Table_Gaussian_pick2} and \ref{Table_Gaussian_pick3} show upper bounds on $\alpha_k$ obtained
from the pick-$1$-element algorithm, the pick-$2$-element algorithm, and the pick-$3$-element algorithm respectively for Gaussian matrix cases.
We ran simulations on 10 different random matrices $H$ for each size and obtained median value of them.
$\alpha_{1}$ in Table \ref{Table_Gaussian_pick2} and \ref{Table_Gaussian_pick3} is from Table \ref{Table_Gaussian_pick1}
and $\alpha_{2}$ in Table \ref{Table_Gaussian_pick3} is from Table \ref{Table_Gaussian_pick2}.

Table \ref{Table_Gaussian_sandwiching} shows the exact $\alpha_k$ from the sandwiching algorithm on different sizes of $H$ matrices
and different values of $k$.
We ran simulations on one randomly chosen matrix $H$ at each size. Hence in total, we tested $4$ different $H$ matrices in this simulation
(our simulation experience shows that the performance and complexity of the sandwich algorithm concentrates for random matrices under this dimension).
The pick-$l$-element algorithm mostly used in the sandwiching algorithm is the pick-$2$-element algorithm,
except for $\alpha_2$ in all $H$ matrix cases, $\alpha_4$ in the $40 \times 20$ $H$ matrix case and $\alpha_5$ in the $40 \times 12$, $40 \times 16$, and $40 \times 20$ $H$ matrix cases.
For $\alpha_2$ in all $H$ matrix cases, the sandwiching algorithm based on the pick-$1$-element algorithm is used.
For other exceptional cases, the sandwiching algorithm based on the pick-$3$-element algorithm is used,
because of the faster running time than the sandwiching algorithm based on the pick-$2$-element algorithm.
The obtained exact $\alpha_k$ is in Table \ref{Table_Gaussian_sandwiching}
and the number of steps and running time to reach that exact $\alpha_k$ are in Table \ref{Table_Gaussian_sandwiching_step} and
Table \ref{Table_Gaussian_sandwiching_time} respectively.
We cited the results from \cite{d2011testing} and \cite{juditsky2011verifiable} in Table \ref{Table_Gaussian_cite}  for easy comparison with our results.
The exact values $\alpha_k$ from our algorithm clearly improve on the upper and lower bounds from \cite{d2011testing} and \cite{juditsky2011verifiable}.
We added one more column in Table \ref{Table_Gaussian_cite} for maximum $k$ satisfying $\alpha_k < \frac{1}{2}$ based on their results.
In the $40 \times 12$ and $40 \times 16$ $H$ matrix cases, we have bigger $k$ than \cite{d2011testing} and \cite{juditsky2011verifiable}.

Table \ref{Table_Gaussian_sandwiching_step} shows the number of running steps to get the exact $\alpha_k$ in Table \ref{Table_Gaussian_sandwiching},
using our sandwiching algorithm.
As shown in Table \ref{Table_Gaussian_sandwiching_step}, we can reduce running steps considerably in reaching the exact $\alpha_k$,
compared with the exhaustive search method.
When $k=3$, for the $40 \times 16$ $H$ matrix case, the number of running steps was reduced to about $\frac{1}{700}$ of the steps in the exhaustive search method.
The running steps for $k=4$ and the same $H$ matrix are reduced to about $\frac{1}{40}$ of the steps in the exhaustive search method.
In $k=5$ case, the reduction rate became $\frac{1}{5}$ on the same $H$ matrix.
We think that this is because when $k$ is big, the gap between the upper bound on $\alpha_k$ from the pick-$2$-element algorithm, and
the lower bound becomes big, thus the number of running steps is increased.
(With the sandwiching algorithm based on the pick-$3$-element algorithm in $k=5$ and the $40 \times 16$ $H$ matrix case,
the reduction rate became $\frac{1}{4400}$.)

Table \ref{Table_Gaussian_sandwiching_time} lists the actual running time of the sandwiching algorithm (mostly based on the pick-$2$-element algorithm).
Except for $k=2$, the pick-$1$-element algorithm is used as the steps in the sandwiching algorithm.
For $k=4$ in the $40 \times 20$ $H$ matrix case, and $k=5$ in the $40 \times 12$, $40 \times 16$ and $40 \times 12$ cases,
the pick-$3$-element algorithm is used in the sandwiching algorithm.
For $k=5$ in the $40 \times 20$ $H$ matrix, our sandwiching algorithm finds the exact value using only $\frac{1}{170}$ of the time used by the exhaustive search method:
the sandwiching algorithm takes around $2.2$ hours, while the exhaustive search method will take around 16 days to find the exact value of $\alpha_k$.

Table \ref{Table_Gaussian_exhaustive} shows the estimated running time of the exhaustive search method.
In order to estimate the running time, we measured the running time to obtain $\alpha_{k,K}$ for $100$ randomly chosen subsets $K$ with $|K|=k$,
and calculated the average time spent per subset.
We multiplied the time per subset with the number of subsets in the exhaustive search method to calculate the overall running time of the exhaustive search method.
For $k=1$ case, we put the actual operation time from Table \ref{Table_Gaussian_sandwiching_time}.

Tables ranging from \ref{Table_Fourier_pick1} to \ref{Table_Fourier_cite} are the results for Fourier matrix cases
and Tables ranging from \ref{Table_Bernoulli_pick1} to \ref{Table_Bernoulli_cite} are the results for Bernoulli matrix cases.
For Fourier matrix cases and Bernoulli matrix cases, we used $A$ matrix in simulations instead of its null space matrix.
$A$ matrix was chosen randomly and simulated for various $k$ from 1 to 5.

Table \ref{Table_Fourier_pick1}, Table \ref{Table_Fourier_pick2}, and Table \ref{Table_Fourier_pick3} show upper bounds on $\alpha_k$ for Fourier matrix cases.
We ran simulations on 10 different random Fourier matrices for each size and obtained median value of them.
$\alpha_{1}$ in Table \ref{Table_Fourier_pick2} and \ref{Table_Fourier_pick3} is from Table \ref{Table_Fourier_pick1}
and $\alpha_{2}$ in Table \ref{Table_Fourier_pick3} is from Table \ref{Table_Fourier_pick2}.

Table \ref{Table_Fourier_sandwiching} shows the exact $\alpha_k$ from the sandwiching algorithm on different sizes of Fourier matrices $A$
and different values of $k$.
We ran simulations on one randomly chosen Fourier matrix $A$ at each size. Hence in total, we tested $4$ different Fourier matrices in this simulation.
The obtained exact $\alpha_k$ via our sandwiching algorithm is in Table \ref{Table_Fourier_sandwiching}
and the number of steps and running time to reach that exact $\alpha_k$ are in Table \ref{Table_Fourier_sandwiching_step} and
Table \ref{Table_Fourier_sandwiching_time} respectively.
The exact values $\alpha_k$ from our algorithm clearly improve on the upper and lower bounds from \cite{d2011testing} and \cite{juditsky2011verifiable} for Fourier matrix cases as well.
For example, when our result for $\alpha_k$ is compared to the results from \cite{d2011testing} and \cite{juditsky2011verifiable} in $20 \times 40$ Fourier matrix, we obtained 0.67 for exact $\alpha_5$, while both \cite{d2011testing} and \cite{juditsky2011verifiable} provides 0.98 as their upper bounds on $\alpha_5$.)

In Table \ref{Table_Fourier_sandwiching}, the pick-$l$-element algorithm mostly used in the sandwiching algorithm is the pick-$2$-element algorithm,
except for $\alpha_2$, and $\alpha_5$ in the $20 \times 40$ and $32 \times 40$ matrix cases.
For $\alpha_2$, the sandwiching algorithm based on the pick-$1$-element algorithm is used.
For $\alpha_5$ in the $20 \times 40$ and $32 \times 40$ matrix cases, the sandwiching algorithm based on the pick-$3$-element algorithm is used,
because of the faster running time than the sandwiching algorithm based on the pick-$2$-element algorithm. In Table \ref{Table_Fourier_sandwiching_step}, when $k=2$,
some results from the sandwiching algorithm based on the pick-$1$-element algorithm reached the maximum operation steps, namely $\binom{n}{k}$ steps.
This is because in those Fourier matrices, the upper bounds obtained from (\ref{eq:cheapbound}) and (\ref{eq:LPbasedupperboundsfor_asingleK}) were too weak to satisfy conditions in the program to stop the simulation in the middle of the operation early.

Table \ref{Table_Bernoulli_pick1}, Table \ref{Table_Bernoulli_pick2}, and Table \ref{Table_Bernoulli_pick3} show upper bounds on $\alpha_k$ for Bernoulli matrix cases.
We ran simulations on 10 different random Bernoulli matrices for each size and obtained median value of them.
$\alpha_{1}$ in Table \ref{Table_Bernoulli_pick2} and \ref{Table_Bernoulli_pick3} is from Table \ref{Table_Bernoulli_pick1}
and $\alpha_{2}$ in Table \ref{Table_Bernoulli_pick3} is from Table \ref{Table_Bernoulli_pick2}.

Table \ref{Table_Bernoulli_sandwiching} shows the exact $\alpha_k$ from the sandwiching algorithm on different sizes of Bernoulli matrices $A$
and different values of $k$.
We ran simulations on one randomly chosen Bernoulli matrix $A$ at each size. Hence in total, we tested $4$ different Bernoulli matrices in this simulation.
The obtained exact $\alpha_k$ via our sandwiching algorithm is in Table \ref{Table_Bernoulli_sandwiching}
and the number of steps and running time to reach that exact $\alpha_k$ are in Table \ref{Table_Bernoulli_sandwiching_step} and
Table \ref{Table_Bernoulli_sandwiching_time} respectively.

In Table \ref{Table_Bernoulli_sandwiching}, in order to obtain $\alpha_2$,
the pick-$1$-element algorithm is used in the sandwiching algorithm (the sandwiching algorithm based on the pick-$1$-element algorithm).
For $\alpha_5$ in the $20 \times 40$, $24 \times 40$, and $28 \times 40$,
the pick-$3$-element algorithm is used in the sandwiching algorithm (the sandwiching algorithm based on the pick-$3$-element algorithm).
When it comes to comparing the results from our sandwiching algorithm with the results from \cite{d2011testing} and \cite{juditsky2011verifiable},
we have bigger recoverable $k$ in the $24 \times 40$, $28 \times 40$, and $32 \times 40$ Bernoulli matrix cases.

Figures \ref{hist_alpha5}, \ref{hist_steps}, and \ref{hist_time} show histograms of the sandwiching algorithm conducted on
100 examples $40 \times 20$ Gaussian matrices. The sandwiching algorithm based on the pick-$3$-element algorithm is used in our simulations. The median value of $\alpha_5$, number of steps and operation time of the sandwiching algorithm in this 100 trials are respectively 0.73, 1400 steps, and 95.36 minutes. The data of 10 samples out of 100 trials are in Table \ref{Table_Gaussian_sandwiching_10trial}. We remark that, if one attempts to use exhaustive search to get the exact $\alpha_k$ for these $100$ matrices, it would take around $4$ years on our machine.

\begin{figure}[!ht]
    \caption{Histogram of the exact $\alpha_5$ from the sandwiching algorithm ($40 \times 20$ Gaussian matrix) }
    \label{hist_alpha5}
    \centering
    \includegraphics[scale=0.5]{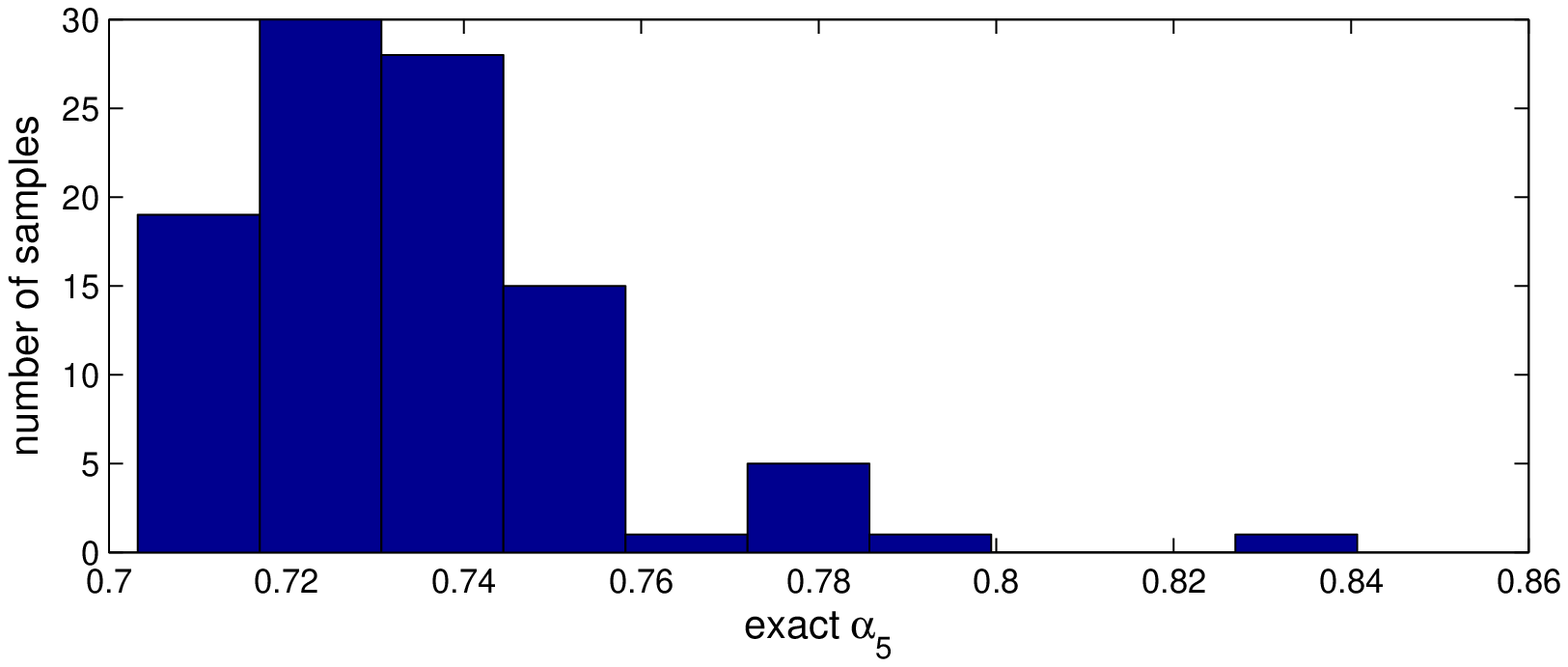}
\end{figure}
\begin{figure}[!ht]
    \caption{Histogram of the number of steps in the sandwiching algorithm ($40 \times 20$ Gaussian matrix) }
    \label{hist_steps}
    \centering
    \includegraphics[scale=0.5]{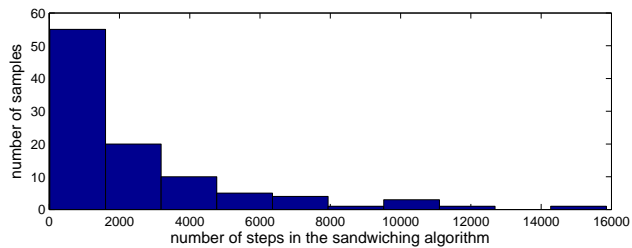}
\end{figure}
\begin{figure}[!ht]
    \caption{Histogram of operation time in the sandwiching algorithm ($40 \times 20$ Gaussian matrix) }
    \label{hist_time}
    \centering
    \includegraphics[scale=0.5]{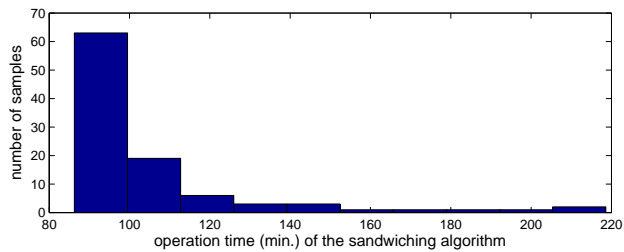}
\end{figure}

Figure \ref{upper_lower_graph} shows how fast the upper bound and lower bound are approaching each other in the sandwiching algorithm (based on the pick-$2$-element algorithm), for $k=5$ and $40 \times 20$ $H$ Gaussian matrix case. We can see that, the sandwiching algorithm offers a good tradeoff between result accuracy and computation complexity, if we ever want to terminate the algorithm early.

\begin{figure}[!ht]
    \caption{Global Upper Bound (GUB) and Global Lower Bound (GLB) in the sandwiching algorithm based on the pick-$2$-element algorithm ($\alpha_5$ in the $40 \times 20$ $H$ Gaussian matrix case) }
    \label{upper_lower_graph}
    \centering
    \includegraphics[scale=0.5]{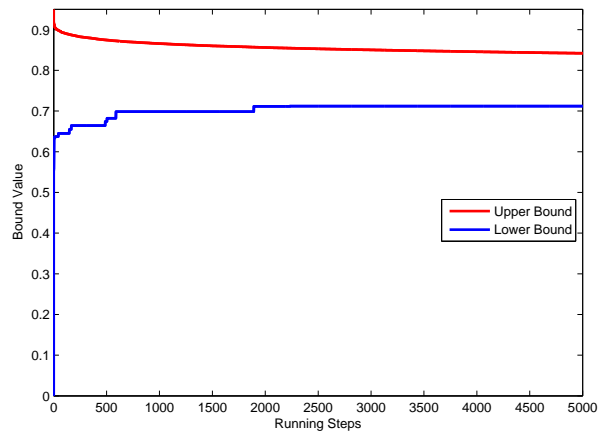}
\end{figure}

Figure \ref{102x256} is the graph for the upper bound on $\alpha_k$ versus $k$ in $102 \times 256$ $A$ Gaussian matrix, ($A \in \mathbf R^{102 \times 256}$).
We obtained $k = 5$ ($\alpha_5$ = 0.49) from the pick-$2$-element algorithm as maximum $k$ such that $\alpha_k < 1/2$ in $102 \times 256$ Gaussian matrix,
while \cite{juditsky2011verifiable} provided 4 for recoverable sparsity in a Gaussian matrix of the same dimension.
($\alpha_1$ in the pick-$2$-element algorithm comes from $\alpha_1$ in the pick-$1$-element algorithm.)
The data are in Table \ref{Table_Gaussian_102x256}.

\begin{figure}[!ht]
    \caption{Upper Bounds on $\alpha_k$ from the pick-$1$-element algorithm and the pick-$2$-element algorithm ($102 \times 256$ $A$ Gaussian matrix) }
    \label{102x256}
    \centering
    \includegraphics[scale=0.5]{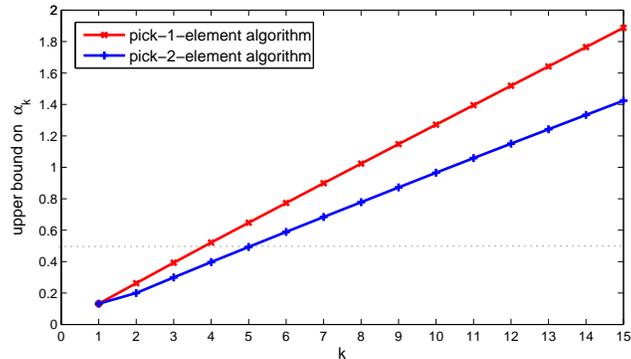}
\end{figure}

\section{Conclusion}
\label{Sec6}

In this paper, we proposed new algorithms to verify the null space conditions.
We first proposed a series of new polynomial-time algorithms to compute upper bounds on $\alpha_k$.
Based on these new polynomial-time algorithms, we further designed a new sandwiching algorithm, to compute the \emph{exact} $\alpha_k$ with greatly reduced complexity.

The future work for verifying the null space conditions includes designing efficient algorithms to reduce the operation time even more.
It is also interesting to extend the framework to the nonlinear measurement setting \cite{xu2011sparse}.

\bibliographystyle{IEEEbib}
\bibliography{refs}

\newpage
\begin{table*}[!ht]
\centering

\begin{threeparttable}[t]
\caption{ Upper bounds from the pick-$1$-element algorithm (Gaussian Matrix)}

\label{Table_Gaussian_pick1}
\setlength{\tabcolsep}{15pt}

{\small
    \begin{tabular}{cccccccc}
        \multicolumn{8}{r} {\scriptsize (Rounded off to the nearest hundredth) }\\
        \hline \hline
        H matrix(n $\times$ m) &{$\rho$}\tnote{a}  & $\alpha_1$    & $\alpha_2$    & $\alpha_3$   & $\alpha_4$   & $\alpha_5$ & $k$\tnote{b}  \\
        \hline
        40 $\times$ 20         &0.5        &0.27      &0.54      &0.79     &1.03     &1.27   & 1  \\
        40 $\times$ 16         &0.6        &0.23      &0.44      &0.65     &0.86     &1.06   & 2  \\
        40 $\times$ 12         &0.7        &0.19      &0.36      &0.53     &0.70     &0.86   & 2  \\
        40 $\times$ 8          &0.8        &0.15      &0.29      &0.43     &0.56     &0.69   & 3  \\
        \hline
    \end{tabular}
}
\begin{tablenotes}
    \item [a]{\footnotesize $\rho = (n-m)/n$ }
    \item [b]{\footnotesize Maximum $k$ s.t. $\alpha_k < 1/2$ }
\end{tablenotes}

\end{threeparttable}
\end{table*}

\begin{table*}[ht]
\centering

\begin{threeparttable}
\caption{ Upper bounds from the pick-$2$-element algorithm (Gaussian Matrix)}

\label{Table_Gaussian_pick2}
\setlength{\tabcolsep}{15pt}

{\small
    \begin{tabular}{cccccccc}
        \multicolumn{8}{r} {\scriptsize (Rounded off to the nearest hundredth) }\\
        \hline \hline
        H matrix(n $\times$ m) &$\rho$\tnote{a} & $\alpha_1$    & $\alpha_2$    & $\alpha_3$   & $\alpha_4$   & $\alpha_5$ & $k$\tnote{b} \\
        \hline
        40 $\times$ 20         &0.5        &0.27      &0.46      &0.65     &0.83     &1.02   &  2 \\
        40 $\times$ 16         &0.6        &0.23      &0.37      &0.53     &0.69     &0.85   &  2 \\
        40 $\times$ 12         &0.7        &0.19      &0.32      &0.46     &0.60     &0.73   &  3 \\
        40 $\times$ 8          &0.8        &0.15      &0.25      &0.37     &0.48     &0.59   &  4 \\
        \hline
    \end{tabular}
}
\begin{tablenotes}
    \item [a]{\footnotesize $\rho = (n-m)/n$ }
    \item [b]{\footnotesize Maximum $k$ s.t. $\alpha_k < 1/2$ }
\end{tablenotes}
\end{threeparttable}
\end{table*}

\begin{table*}[ht]
\centering

\begin{threeparttable}
\caption{ Upper bounds from the pick-$3$-element algorithm (Gaussian Matrix)}

\label{Table_Gaussian_pick3}
\setlength{\tabcolsep}{15pt}

{\small
    \begin{tabular}{cccccccc}
        \multicolumn{8}{r} {\scriptsize (Rounded off to the nearest hundredth) }\\
        \hline \hline
        H matrix(n $\times$ m) &$\rho$\tnote{a} & $\alpha_1$    & $\alpha_2$    & $\alpha_3$   & $\alpha_4$   & $\alpha_5$ & $k$\tnote{b} \\
        \hline
        40 $\times$ 20         &0.5        &0.27      &0.46      &0.55     &0.72     &0.88   &  2 \\
        40 $\times$ 16         &0.6        &0.23      &0.37      &0.47     &0.61     &0.74   &  3 \\
        40 $\times$ 12         &0.7        &0.19      &0.32      &0.41     &0.54     &0.65   &  3 \\
        40 $\times$ 8          &0.8        &0.15      &0.25      &0.33     &0.43     &0.52   &  4 \\
        \hline
    \end{tabular}
}
    \begin{tablenotes}
        \item [a]{\footnotesize $\rho = (n-m)/n$ }
        \item [b]{\footnotesize Maximum $k$ s.t. $\alpha_k < 1/2$ }
    \end{tablenotes}

\end{threeparttable}
\end{table*}

\begin{table*}[ht]
\centering

\begin{threeparttable}
\caption{ Exact $\alpha_k$ from the sandwiching algorithm (Gaussian Matrix)}

\label{Table_Gaussian_sandwiching}
\setlength{\tabcolsep}{15pt}

{\small
    \begin{tabular}{cccccccc}
        \multicolumn{8}{r} {\scriptsize (Rounded off to the nearest hundredth) }\\
        \hline \hline
        H matrix(n $\times$ m) &$\rho$\tnote{a} & $\alpha_1$    & $\alpha_2$\tnote{c}    & $\alpha_3$   & $\alpha_4$   & $\alpha_5$ & $k$\tnote{b} \\
        \hline
        40 $\times$ 20         &0.5        & 0.27     & 0.42     & 0.54    & 0.63\tnote{d}      & 0.71\tnote{d}       &  2 \\
        40 $\times$ 16         &0.6        & 0.22     & 0.38     & 0.46    & 0.55               & 0.63\tnote{d}       &  3 \\
        40 $\times$ 12         &0.7        & 0.17     & 0.27     & 0.36    & 0.44               & 0.52\tnote{d}       &  4 \\
        40 $\times$ 8          &0.8        & 0.15     & 0.27     & 0.36    & 0.42               & 0.50                &  4 \\
        \hline
    \end{tabular}
}
    \begin{tablenotes}
        \item[a] {\footnotesize $\rho = (n-m)/n$ }
        \item[b] {\footnotesize Maximum $k$ s.t. $\alpha_k < 1/2$ }
        \item[c] {\footnotesize Obtained from the sandwiching algorithm based on the pick-$1$-element algorithm }
        \item[d] {\footnotesize Obtained from the sandwiching algorithm based on the pick-$3$-element algorithm }
        \\
    \end{tablenotes}

\end{threeparttable}
\end{table*}

\begin{table*}[h]
\centering

\begin{threeparttable}
\caption{ Upper and lower bounds when $n=40$ from \cite{d2011testing} and \cite{juditsky2011verifiable} (Gaussian Matrix)}

\label{Table_Gaussian_cite}
\setlength{\tabcolsep}{15pt}

{\small
    \begin{tabular}{cccccccc}
        \multicolumn{7}{r} {\scriptsize }\\
        \hline \hline
        Relaxation &$\rho$     &$\alpha_1$     &$\alpha_2$    & $\alpha_3$   & $\alpha_4$   & $\alpha_5$ & $k$\tnote{c} \\
        \hline
        LP\tnote{a}   &0.5        & 0.27     & 0.49    & 0.67    & 0.83    & 0.97  &  2 \\
        SDP\tnote{b} &0.5        & 0.27     & 0.49    & 0.65    & 0.81    & 0.94  &  2 \\
        SDP low.   &0.5        & 0.27     & 0.31    & 0.33    & 0.32    & 0.35  &  2 \\
        \hline
        LP         &0.6        & 0.22     & 0.41    & 0.57    & 0.72    & 0.84  &  2 \\
        SDP        &0.6        & 0.22     & 0.41    & 0.56    & 0.70    & 0.82  &  2 \\
        SDP low.   &0.6        & 0.22     & 0.29    & 0.31    & 0.32    & 0.36  &  2 \\
        \hline
        LP         &0.7        & 0.20     & 0.34    & 0.47    & 0.60    & 0.71  &  3 \\
        SDP        &0.7        & 0.20     & 0.34    & 0.46    & 0.59    & 0.70  &  3 \\
        SDP low.   &0.7        & 0.20     & 0.27    & 0.31    & 0.35    & 0.38  &  3 \\
        \hline
        LP         &0.8        & 0.15     & 0.26    & 0.37    & 0.48    & 0.58  &  4 \\
        SDP        &0.8        & 0.15     & 0.26    & 0.37    & 0.48    & 0.58  &  4 \\
        SDP low.   &0.8        & 0.15     & 0.23    & 0.28    & 0.33    & 0.38  &  4 \\
        \hline
    \end{tabular}
}
    \begin{tablenotes}
        \item[a] {\footnotesize Linear Programming }
        \item[b] {\footnotesize Semidefinite Programming }
        \item[c] {\footnotesize Maximum $k$ s.t. $\alpha_k < 1/2$ }
    \end{tablenotes}

\end{threeparttable}
\end{table*}

\FloatBarrier
\begin{table*}[ht]
\centering

\begin{threeparttable}
\caption{ Running steps in the sandwiching algorithm (Gaussian Matrix)}

\label{Table_Gaussian_sandwiching_step}
\setlength{\tabcolsep}{15pt}

{\small
    \begin{tabular}{ccccccc}
        \multicolumn{7}{r} {\scriptsize }\\
        \hline \hline
        H matrix(n $\times$ m) &$\rho$\tnote{a}     & $k=1$\tnote{b}  & $k=2$\tnote{c}     & $k=3$      & $k=4$         & $k=5$     \\
        \hline
        Exhaustive Search      &-              & -          & 780     & 9880     & 91390       & 658008  \\
        40 $\times$ 20         &0.5            & -          & 194     & 77       & 19\tnote{d} & 3897\tnote{d}  \\
        40 $\times$ 16         &0.6            & -          & 43      & 14       & 2362        & 148\tnote{d}  \\
        40 $\times$ 12         &0.7            & -          & 179     & 25       & 2141        & 78\tnote{d}   \\
        40 $\times$ 8          &0.8            & -          & 5       & 3        & 87          & 702     \\
        \hline
    \end{tabular}
}
    \begin{tablenotes}
        \item[a]{\footnotesize $\rho = (n-m)/n$ }
        \item[b]{\footnotesize Sandwiching algorithm is not applied }
        \item[c]{\footnotesize Obtained from the sandwiching algorithm based on the pick-$1$-element algorithm }
        \item[d]{\footnotesize Obtained from the sandwiching algorithm based on the pick-$3$-element algorithm }
    \end{tablenotes}

\end{threeparttable}

\end{table*}

\begin{table*}[!ht]
\centering

\begin{threeparttable}
\caption{ Running time of the sandwiching algorithm (Gaussian Matrix)}

\label{Table_Gaussian_sandwiching_time}
\setlength{\tabcolsep}{15pt}
{\small
    \begin{tabular}{ccccccc}
        \multicolumn{7}{r} {\scriptsize (Unit: minute)}\\
        \hline \hline
        H matrix(n $\times$ m) &$\rho$\tnote{a} & $k=1$     & $k=2$\tnote{b}        & $k=3$      & $k=4$         & $k=5$     \\
        \hline
        40 $\times$ 20         &0.5        & 0.10    & 2.23       & 4.20        & 89.54\tnote{c}& 133.93\tnote{c} \\
        40 $\times$ 16         &0.6        & 0.12    & 0.59       & 3.63        & 14.54         & 92.13\tnote{c}\\
        40 $\times$ 12         &0.7        & 0.11    & 2.05       & 3.76        & 16.15         & 92.04\tnote{c} \\
        40 $\times$ 8          &0.8        & 0.10    & 0.17       & 3.52        & 4.12          & 8.17   \\
        \hline
    \end{tabular}
}
    \begin{tablenotes}
        \item[a]{\footnotesize $\rho = (n-m)/n$ }
        \item[b]{\footnotesize Obtained from the sandwiching algorithm based on the pick-$1$-element algorithm }
        \item[c]{\footnotesize Obtained from the sandwiching algorithm based on the pick-$3$-element algorithm }
    \end{tablenotes}

\end{threeparttable}
\end{table*}

\begin{table*}[!ht]
\centering

\begin{threeparttable}
\caption{Estimated running time\tnote{a} of the exhaustive search method (Gaussian Matrix)}

\label{Table_Gaussian_exhaustive}
\setlength{\tabcolsep}{15pt}

{\small
    \begin{tabular}{ccccccc}
        \multicolumn{7}{r} {\scriptsize (Unit: minute)}\\      \hline \hline 
        H matrix(n $\times$ m) &$\rho$     & $k=1$\tnote{b}     & $k=2$       & $k=3$      & $k=4$         & $k=5$     \\
        \hline
        40 $\times$ 20         &0.5        & 0.10    & 3.39       & 86.93        & 1585        & 2.3047e4 \\
        40 $\times$ 16         &0.6        & 0.12    & 3.29       & 86.13        & 1610        & 2.3699e4 \\
        40 $\times$ 12         &0.7        & 0.11    & 3.38       & 86.33        & 1611        & 2.3247e4 \\
        40 $\times$ 8          &0.8        & 0.10    & 3.40       & 85.45        & 1609        & 2.3318e4   \\
        \hline
    \end{tabular}
}
    \begin{tablenotes}
        \item[a] {\footnotesize Estimated running time = running time per step $\times$ total number of steps in exhaustive search method}
        \item[b] {\footnotesize From Table \ref{Table_Gaussian_sandwiching_time} }
    \end{tablenotes}

\end{threeparttable}

\end{table*}

\begin{table*}[!ht]
\centering

\begin{threeparttable}
\caption{ The sandwiching algorithm\tnote{a} - 10 samples (40 $\times$ 20 Gaussian Matrix) }

\label{Table_Gaussian_sandwiching_10trial}
\setlength{\tabcolsep}{7pt}
{\small
    \begin{tabular}{cccccccccccc}
        \multicolumn{12}{r} {\scriptsize (Rounded off to the nearest hundredth)}\\
        \hline \hline
        trial                   &1      &2      &3      &4      &5      &6      &7      &8      &9      &10     &median\tnote{b}   \\
        \hline
        $\alpha_5$              &0.75	&0.73	&0.73	&0.79	&0.72	&0.72	&0.72	&0.74	&0.74	&0.76	&0.74    \\
        number of steps         &169	&1582	&1930	&10	    &807	&3549	&1033	&767	&464	&454	&787     \\
        operation time(min.)    &88.99	&101.37	&104.51	&87.18	&90.54	&104.06	&92.09	&90.20	&88.34	&89.65	&90.37	 \\
        \hline
    \end{tabular}
}
    \begin{tablenotes}
        \item[a]{\footnotesize The sandwiching algorithm based on the pick-$3$-element algorithm }
        \item[b]{\footnotesize Median value of 10 samples in the table. }
    \end{tablenotes}

\end{threeparttable}
\end{table*}

\begin{table*}[!ht]
\centering

\begin{threeparttable}
\caption{ Upper bounds on $\alpha_k$ ($102 \times 256$ Gaussian Matrix)\tnote{a} }

\label{Table_Gaussian_102x256}
\setlength{\tabcolsep}{7pt}
{\small
    \begin{tabular}{cccccccccccccccc}
        \multicolumn{16}{r} {\scriptsize (Rounded off to the nearest hundredth)}\\
        \hline \hline
        k                                   &1      &2      &3      &4      &5      &6      &7      &8      &9      &10     &11    &12    &13    &14    &15   \\
        \hline
        $\alpha_k$ from pick-$1$\tnote{b}   &0.13   &0.26   &0.39   &0.52   &0.65   &0.77   &0.90   &1.02   &1.15   &1.27   &1.40  &1.52  &1.64  &1.76  &1.89  \\
        $\alpha_k$ from pick-$2$\tnote{c}   &0.13   &0.20   &0.30   &0.40   &0.49   &0.59   &0.68   &0.78   &0.87   &0.97   &1.06  &1.15  &1.24  &1.33  &1.42  \\
        \hline
    \end{tabular}
}
    \begin{tablenotes}
        \item[a]{\footnotesize $102 \times 256$ $A$ Gaussian matrix matrix ($256 \times 154$ $H$ Gaussian matrix) }
        \item[b]{\footnotesize The pick-$1$-element algorithm}
        \item[c]{\footnotesize The pick-$2$-element algorithm}

    \end{tablenotes}

\end{threeparttable}
\end{table*}

\FloatBarrier

\newpage
\begin{table*}[!ht]
\centering

\begin{threeparttable}[t]
\caption{ Upper bounds from the pick-$1$-element algorithm (Fourier Matrix)}

\label{Table_Fourier_pick1}
\setlength{\tabcolsep}{15pt}

{\small
    \begin{tabular}{cccccccc}
        \multicolumn{8}{r} {\scriptsize (Rounded off to the nearest hundredth) }\\
        \hline \hline
        A matrix((n-m) $\times$ n)\tnote{a} &{$\rho$}\tnote{b}  & $\alpha_1$    & $\alpha_2$    & $\alpha_3$   & $\alpha_4$   & $\alpha_5$ & $k$\tnote{c}  \\
        \hline
        20 $\times$ 40         &0.5        &0.20      &0.41      &0.61     &0.81     &1.01   & 2  \\
        24 $\times$ 40         &0.6        &0.15      &0.31      &0.46     &0.61     &0.77   & 3  \\
        28 $\times$ 40         &0.7        &0.13      &0.26      &0.39     &0.52     &0.64   & 3  \\
        32 $\times$ 40         &0.8        &0.10      &0.19      &0.29     &0.38     &0.48   & 5  \\
        \hline
    \end{tabular}
}
\begin{tablenotes}
    \item [a]{\footnotesize $Az = 0$ }
    \item [b]{\footnotesize $\rho = (n-m)/n$ }
    \item [c]{\footnotesize Maximum $k$ s.t. $\alpha_k < 1/2$ }
\end{tablenotes}

\end{threeparttable}
\end{table*}

\begin{table*}[ht]
\centering

\begin{threeparttable}
\caption{ Upper bounds from the pick-$2$-element algorithm (Fourier Matrix)}

\label{Table_Fourier_pick2}
\setlength{\tabcolsep}{15pt}

{\small
    \begin{tabular}{cccccccc}
        \multicolumn{8}{r} {\scriptsize (Rounded off to the nearest hundredth) }\\
        \hline \hline
        A matrix((n-m) $\times$ n)\tnote{a} &$\rho$\tnote{b} & $\alpha_1$    & $\alpha_2$    & $\alpha_3$   & $\alpha_4$   & $\alpha_5$ & $k$\tnote{c} \\
        \hline
        20 $\times$ 40          &0.5        &0.20      &0.34      &0.52     &0.69     &0.86   & 2  \\
        24 $\times$ 40          &0.6        &0.15      &0.30      &0.46     &0.61     &0.76   & 3  \\
        28 $\times$ 40          &0.7        &0.13      &0.23      &0.35     &0.46     &0.58   & 4  \\
        32 $\times$ 40          &0.8        &0.10      &0.18      &0.26     &0.35     &0.44   & 5  \\
        \hline
    \end{tabular}
}
\begin{tablenotes}
    \item [a]{\footnotesize $Az = 0$ }
    \item [b]{\footnotesize $\rho = (n-m)/n$ }
    \item [c]{\footnotesize Maximum $k$ s.t. $\alpha_k < 1/2$ }
\end{tablenotes}
\end{threeparttable}
\end{table*}

\begin{table*}[ht]
\centering

\begin{threeparttable}
\caption{ Upper bounds from the pick-$3$-element algorithm (Fourier Matrix)}

\label{Table_Fourier_pick3}
\setlength{\tabcolsep}{15pt}

{\small
    \begin{tabular}{cccccccc}
        \multicolumn{8}{r} {\scriptsize (Rounded off to the nearest hundredth) }\\
        \hline \hline
        A matrix((n-m) $\times$ n)\tnote{a} &$\rho$\tnote{b} & $\alpha_1$    & $\alpha_2$    & $\alpha_3$   & $\alpha_4$   & $\alpha_5$ & $k$\tnote{c} \\
        \hline
        20$\times$ 40          &0.5        &0.20      &0.34      &0.47     &0.62     &0.78   & 3 \\
        24 $\times$ 40         &0.6        &0.15      &0.30      &0.36     &0.49     &0.61   & 4 \\
        28 $\times$ 40         &0.7        &0.13      &0.23      &0.32     &0.42     &0.52   & 4 \\
        32 $\times$ 40         &0.8        &0.10      &0.18      &0.26     &0.34     &0.43   & 5 \\
        \hline
    \end{tabular}
}
    \begin{tablenotes}
        \item [a]{\footnotesize $Az = 0$ }
        \item [b]{\footnotesize $\rho = (n-m)/n$ }
        \item [c]{\footnotesize Maximum $k$ s.t. $\alpha_k < 1/2$ }
    \end{tablenotes}

\end{threeparttable}
\end{table*}

\begin{table*}[ht]
\centering

\begin{threeparttable}
\caption{ Exact $\alpha_k$ from the sandwiching algorithm (Fourier Matrix)}

\label{Table_Fourier_sandwiching}
\setlength{\tabcolsep}{15pt}

{\small
    \begin{tabular}{cccccccc}
        \multicolumn{8}{r} {\scriptsize (Rounded off to the nearest hundredth) }\\
        \hline \hline
        A matrix((n-m) $\times$ n)\tnote{a} &$\rho$\tnote{b} & $\alpha_1$    & $\alpha_2$\tnote{d}    & $\alpha_3$   & $\alpha_4$   & $\alpha_5$ & $k$\tnote{c} \\
        \hline
        20 $\times$ 40         &0.5        & 0.19     & 0.35     & 0.45    & 0.58         & 0.67\tnote{e} &  3 \\
        24 $\times$ 40         &0.6        & 0.18     & 0.33     & 0.47    & 0.59         & 0.70          &  3 \\
        28 $\times$ 40         &0.7        & 0.13     & 0.25     & 0.38    & 0.50         & 0.63          &  3 \\
        32 $\times$ 40         &0.8        & 0.09     & 0.17     & 0.24    & 0.31         & 0.38\tnote{e} &  5 \\
        \hline
    \end{tabular}
}
    \begin{tablenotes}
        \item [a]{\footnotesize $Az = 0$ }
        \item[b] {\footnotesize $\rho = (n-m)/n$ }
        \item[c] {\footnotesize Maximum $k$ s.t. $\alpha_k < 1/2$ }
        \item[d] {\footnotesize Obtained from the sandwiching algorithm based on the pick-$1$-element algorithm }
        \item[e] {\footnotesize Obtained from the sandwiching algorithm based on the pick-$3$-element algorithm }
        \\
    \end{tablenotes}

\end{threeparttable}
\end{table*}

\FloatBarrier
\begin{table*}[ht]
\centering

\begin{threeparttable}
\caption{ Running steps in the sandwiching algorithm (Fourier Matrix)}

\label{Table_Fourier_sandwiching_step}
\setlength{\tabcolsep}{15pt}

{\small
    \begin{tabular}{ccccccc}
        \multicolumn{7}{r} {\scriptsize }\\
        \hline \hline
        A matrix((n-m) $\times$ n) &$\rho$\tnote{a}     & $k=1$\tnote{b}  & $k=2$\tnote{c}     & $k=3$      & $k=4$         & $k=5$     \\
        \hline
        Exhaustive Search      &-              & -          & 780     & 9880      & 91390      & 658008  \\
        20 $\times$ 40         &0.5            & -          & 780     & 720       & 3250       & 640\tnote{d}  \\
        24 $\times$ 40         &0.6            & -          & 780     & 40        & 120        & 920  \\
        28 $\times$ 40         &0.7            & -          & 175     & 120       & 270        & 280    \\
        32 $\times$ 40         &0.8            & -          & 780     & 240       & 3120       & 720\tnote{d}     \\
        \hline
    \end{tabular}
}
    \begin{tablenotes}
        \item[a]{\footnotesize $\rho = (n-m)/n$ }
        \item[b]{\footnotesize Sandwiching algorithm is not applied }
        \item[c]{\footnotesize Obtained from the sandwiching algorithm based on the pick-$1$-element algorithm }
        \item[d]{\footnotesize Obtained from the sandwiching algorithm based on the pick-$3$-element algorithm }
    \end{tablenotes}

\end{threeparttable}

\end{table*}

\begin{table*}[!ht]
\centering

\begin{threeparttable}
\caption{ Running time of the sandwiching algorithm (Fourier Matrix)}

\label{Table_Fourier_sandwiching_time}
\setlength{\tabcolsep}{15pt}
{\small
    \begin{tabular}{ccccccc}
        \multicolumn{7}{r} {\scriptsize (Unit: minute)}\\
        \hline \hline
        A matrix((n-m) $\times$ n) &$\rho$\tnote{a} & $k=1$     & $k=2$\tnote{b}        & $k=3$      & $k=4$         & $k=5$     \\
        \hline
        20 $\times$ 40         &0.5        & 0.10    & 8.63       & 10.95        & 13.06         & 111.97\tnote{c} \\
        24 $\times$ 40         &0.6        & 0.10    & 8.65       & 3.75         & 4.33          & 7.32 \\
        28 $\times$ 40         &0.7        & 0.18    & 2.03       & 4.64         & 4.02          & 9.17 \\
        32 $\times$ 40         &0.8        & 0.13    & 8.69       & 5.93         & 25.40         & 107.58\tnote{c}   \\
        \hline
    \end{tabular}
}
    \begin{tablenotes}
        \item[a]{\footnotesize $\rho = (n-m)/n$ }
        \item[b]{\footnotesize Obtained from the sandwiching algorithm based on the pick-$1$-element algorithm }
        \item[c]{\footnotesize Obtained from the sandwiching algorithm based on the pick-$3$-element algorithm }
    \end{tablenotes}

\end{threeparttable}
\end{table*}

\begin{table*}[h]
\centering

\begin{threeparttable}
\caption{ Upper and lower bounds when $n=40$ from \cite{d2011testing} and \cite{juditsky2011verifiable} (Fourier Matrix)}

\label{Table_Fourier_cite}
\setlength{\tabcolsep}{15pt}

{\small
    \begin{tabular}{cccccccc}
        \multicolumn{7}{r} {\scriptsize }\\
        \hline \hline
        Relaxation &$\rho$     &$\alpha_1$     &$\alpha_2$    & $\alpha_3$   & $\alpha_4$   & $\alpha_5$ & $k$\tnote{c} \\
        \hline
        LP\tnote{a}   &0.5     & 0.21     & 0.38    & 0.57    & 0.82    & 0.98  & 2  \\
        SDP\tnote{b} &0.5      & 0.21     & 0.38    & 0.57    & 0.82    & 0.98  & 2  \\
        SDP low.   &0.5        & 0.05     & 0.10    & 0.16    & 0.24    & 0.32  & 2  \\
        \hline
        LP         &0.6        & 0.16     & 0.31    & 0.46    & 0.61    & 0.82  & 3  \\
        SDP        &0.6        & 0.16     & 0.31    & 0.46    & 0.61    & 0.82  & 3  \\
        SDP low.   &0.6        & 0.04     & 0.09    & 0.15    & 0.20    & 0.31  & 3  \\
        \hline
        LP         &0.7        & 0.12     & 0.25    & 0.39    & 0.50    & 0.62  & 3  \\
        SDP        &0.7        & 0.12     & 0.25    & 0.39    & 0.50    & 0.62  & 3  \\
        SDP low.   &0.7        & 0.04     & 0.09    & 0.14    & 0.18    & 0.22  & 3  \\
        \hline
        LP         &0.8        & 0.10     & 0.20    & 0.30    & 0.38    & 0.48  &  5 \\
        SDP        &0.8        & 0.10     & 0.20    & 0.30    & 0.38    & 0.48  &  5 \\
        SDP low.   &0.8        & 0.04     & 0.07    & 0.13    & 0.17    & 0.23  &  5 \\
        \hline
    \end{tabular}
}
    \begin{tablenotes}
        \item[a] {\footnotesize Linear Programming }
        \item[b] {\footnotesize Semidefinite Programming }
        \item[c] {\footnotesize Maximum $k$ s.t. $\alpha_k < 1/2$ }
    \end{tablenotes}

\end{threeparttable}
\end{table*}

\FloatBarrier

\newpage
\begin{table*}[!ht]
\centering

\begin{threeparttable}[t]
\caption{ Upper bounds from the pick-$1$-element algorithm (Bernoulli Matrix)}

\label{Table_Bernoulli_pick1}
\setlength{\tabcolsep}{15pt}

{\small
    \begin{tabular}{cccccccc}
        \multicolumn{8}{r} {\scriptsize (Rounded off to the nearest hundredth) }\\
        \hline \hline
        A matrix((n-m) $\times$ n)\tnote{a} &{$\rho$}\tnote{b}  & $\alpha_1$    & $\alpha_2$    & $\alpha_3$   & $\alpha_4$   & $\alpha_5$ & $k$\tnote{c}  \\
        \hline
        20 $\times$ 40         &0.5        &0.25      &0.49      &0.72     &0.95     &1.17   &2  \\
        24 $\times$ 40         &0.6        &0.22      &0.41      &0.60     &0.79     &0.97   &2  \\
        28 $\times$ 40         &0.7        &0.19      &0.36      &0.53     &0.68     &0.83   &2  \\
        32 $\times$ 40         &0.8        &0.14      &0.28      &0.41     &0.54     &0.66   &3  \\
        \hline
    \end{tabular}
}
\begin{tablenotes}
    \item [a]{\footnotesize $Az = 0$ }
    \item [b]{\footnotesize $\rho = (n-m)/n$ }
    \item [c]{\footnotesize Maximum $k$ s.t. $\alpha_k < 1/2$ }
\end{tablenotes}

\end{threeparttable}
\end{table*}

\begin{table*}[ht]
\centering

\begin{threeparttable}
\caption{ Upper bounds from the pick-$2$-element algorithm (Bernoulli Matrix)}

\label{Table_Bernoulli_pick2}
\setlength{\tabcolsep}{15pt}

{\small
    \begin{tabular}{cccccccc}
        \multicolumn{8}{r} {\scriptsize (Rounded off to the nearest hundredth) }\\
        \hline \hline
        A matrix((n-m) $\times$ n)\tnote{a} &$\rho$\tnote{b} & $\alpha_1$    & $\alpha_2$    & $\alpha_3$   & $\alpha_4$   & $\alpha_5$ & $k$\tnote{c} \\
        \hline
        20 $\times$ 40          &0.5        &0.25      &0.42      &0.60     &0.78     &0.96   & 2  \\
        24 $\times$ 40          &0.6        &0.22      &0.36      &0.51     &0.67     &0.82   & 2  \\
        28 $\times$ 40          &0.7        &0.19      &0.29      &0.43     &0.55     &0.67   & 3  \\
        32 $\times$ 40          &0.8        &0.14      &0.24      &0.35     &0.45     &0.55   & 4  \\
        \hline
    \end{tabular}
}
\begin{tablenotes}
    \item [a]{\footnotesize $Az = 0$ }
    \item [b]{\footnotesize $\rho = (n-m)/n$ }
    \item [c]{\footnotesize Maximum $k$ s.t. $\alpha_k < 1/2$ }
\end{tablenotes}
\end{threeparttable}
\end{table*}

\begin{table*}[ht]
\centering

\begin{threeparttable}
\caption{ Upper bounds from the pick-$3$-element algorithm (Bernoulli Matrix)}

\label{Table_Bernoulli_pick3}
\setlength{\tabcolsep}{15pt}

{\small
    \begin{tabular}{cccccccc}
        \multicolumn{8}{r} {\scriptsize (Rounded off to the nearest hundredth) }\\
        \hline \hline
        A matrix((n-m) $\times$ n)\tnote{a} &$\rho$\tnote{b} & $\alpha_1$    & $\alpha_2$    & $\alpha_3$   & $\alpha_4$   & $\alpha_5$ & $k$\tnote{c} \\
        \hline
        20$\times$ 40          &0.5        &0.25      &0.42      &0.53     &0.69     &0.85   &2   \\
        24 $\times$ 40         &0.6        &0.22      &0.36      &0.46     &0.60     &0.73   &3   \\
        28 $\times$ 40         &0.7        &0.19      &0.29      &0.39     &0.51     &0.62   &3   \\
        32 $\times$ 40         &0.8        &0.14      &0.24      &0.31     &0.41     &0.50   &5   \\
        \hline
    \end{tabular}
}
    \begin{tablenotes}
        \item [a]{\footnotesize $Az = 0$ }
        \item [b]{\footnotesize $\rho = (n-m)/n$ }
        \item [c]{\footnotesize Maximum $k$ s.t. $\alpha_k < 1/2$ }
    \end{tablenotes}

\end{threeparttable}
\end{table*}

\begin{table*}[ht]
\centering

\begin{threeparttable}
\caption{ Exact $\alpha_k$ from the sandwiching algorithm (Bernoulli Matrix)}

\label{Table_Bernoulli_sandwiching}
\setlength{\tabcolsep}{15pt}

{\small
    \begin{tabular}{cccccccc}
        \multicolumn{8}{r} {\scriptsize (Rounded off to the nearest hundredth) }\\
        \hline \hline
        A matrix((n-m) $\times$ n)\tnote{a} &$\rho$\tnote{b} & $\alpha_1$    & $\alpha_2$\tnote{d}    & $\alpha_3$   & $\alpha_4$   & $\alpha_5$ & $k$\tnote{c} \\
        \hline
        20 $\times$ 40         &0.5        & 0.25     & 0.41     & 0.52    & 0.62         & 0.70\tnote{e} &  2  \\
        24 $\times$ 40         &0.6        & 0.23     & 0.35     & 0.45    & 0.56         & 0.65\tnote{e} &   3  \\
        28 $\times$ 40         &0.7        & 0.17     & 0.30     & 0.39    & 0.47         & 0.54\tnote{e} &   4  \\
        32 $\times$ 40         &0.8        & 0.14     & 0.24     & 0.32    & 0.40         & 0.46       &   5  \\
        \hline
    \end{tabular}
}
    \begin{tablenotes}
        \item[a] {\footnotesize $Az = 0$ }
        \item[b] {\footnotesize $\rho = (n-m)/n$ }
        \item[c] {\footnotesize Maximum $k$ s.t. $\alpha_k < 1/2$ }
        \item[d] {\footnotesize Obtained from the sandwiching algorithm based on the pick-$1$-element algorithm }
        \item[e] {\footnotesize Obtained from the sandwiching algorithm based on the pick-$3$-element algorithm }
        \\
    \end{tablenotes}

\end{threeparttable}
\end{table*}

\FloatBarrier
\begin{table*}[ht]
\centering

\begin{threeparttable}
\caption{ Running steps in the sandwiching algorithm (Bernoulli Matrix)}

\label{Table_Bernoulli_sandwiching_step}
\setlength{\tabcolsep}{15pt}

{\small
    \begin{tabular}{ccccccc}
        \multicolumn{7}{r} {\scriptsize }\\
        \hline \hline
        A matrix((n-m) $\times$ n) &$\rho$\tnote{a}     & $k=1$\tnote{b}  & $k=2$\tnote{c}     & $k=3$      & $k=4$         & $k=5$     \\
        \hline
        Exhaustive Search      &-              & -          & 780     & 9880    & 91390       & 658008        \\
        20 $\times$ 40         &0.5            & -          & 218     & 63      & 8789        & 1004\tnote{d} \\
        24 $\times$ 40         &0.6            & -          & 124     & 36      & 809         & 27\tnote{d}   \\
        28 $\times$ 40         &0.7            & -          & 59      & 7       & 231         & 36\tnote{d}   \\
        32 $\times$ 40         &0.8            & -          & 33      & 5       & 66          & 2303          \\
        \hline
    \end{tabular}
}
    \begin{tablenotes}
        \item[a]{\footnotesize $\rho = (n-m)/n$ }
        \item[b]{\footnotesize Sandwiching algorithm is not applied }
        \item[c]{\footnotesize Obtained from the sandwiching algorithm based on the pick-$1$-element algorithm }
        \item[d]{\footnotesize Obtained from the sandwiching algorithm based on the pick-$3$-element algorithm }
    \end{tablenotes}

\end{threeparttable}

\end{table*}

\begin{table*}[!ht]
\centering

\begin{threeparttable}
\caption{ Running time of the sandwiching algorithm (Bernoulli Matrix)}

\label{Table_Bernoulli_sandwiching_time}
\setlength{\tabcolsep}{15pt}
{\small
    \begin{tabular}{ccccccc}
        \multicolumn{7}{r} {\scriptsize (Unit: minute)}\\
        \hline \hline
        A matrix((n-m) $\times$ n) &$\rho$\tnote{a} & $k=1$     & $k=2$\tnote{b}        & $k=3$      & $k=4$         & $k=5$     \\
        \hline
        20 $\times$ 40         &0.5        & 0.10    & 2.55       & 4.15        & 57.93        & 98.87\tnote{c}  \\
        24 $\times$ 40         &0.6        & 0.10    & 1.51       & 3.88        & 7.60         & 93.09\tnote{c}  \\
        28 $\times$ 40         &0.7        & 0.11    & 0.79       & 3.65        & 4.59         & 92.07\tnote{c}  \\
        32 $\times$ 40         &0.8        & 0.11    & 0.50       & 3.55        & 3.99         & 16.89           \\
        \hline
    \end{tabular}
}
    \begin{tablenotes}
        \item[a]{\footnotesize $\rho = (n-m)/n$ }
        \item[b]{\footnotesize Obtained from the sandwiching algorithm based on the pick-$1$-element algorithm }
        \item[c]{\footnotesize Obtained from the sandwiching algorithm based on the pick-$3$-element algorithm }
    \end{tablenotes}

\end{threeparttable}
\end{table*}

\begin{table*}[h]
\centering

\begin{threeparttable}
\caption{ Upper and lower bounds when $n=40$ from \cite{d2011testing} and \cite{juditsky2011verifiable} (Bernoulli Matrix)}

\label{Table_Bernoulli_cite}
\setlength{\tabcolsep}{15pt}

{\small
    \begin{tabular}{cccccccc}
        \multicolumn{7}{r} {\scriptsize }\\
        \hline \hline
        Relaxation &$\rho$     &$\alpha_1$     &$\alpha_2$    & $\alpha_3$   & $\alpha_4$   & $\alpha_5$ & $k$\tnote{c} \\
        \hline
        LP\tnote{a}   &0.5     & 0.25     & 0.45    & 0.64    & 0.82    & 0.97  & 2  \\
        SDP\tnote{b} &0.5      & 0.25     & 0.45    & 0.63    & 0.80    & 0.94  & 2  \\
        SDP low.   &0.5        & 0.25     & 0.28    & 0.29    & 0.29    & 0.34  & 2  \\
        \hline
        LP         &0.6        & 0.21     & 0.38    & 0.55    & 0.69    & 0.83  & 2  \\
        SDP        &0.6        & 0.21     & 0.38    & 0.54    & 0.68    & 0.81  & 2  \\
        SDP low.   &0.6        & 0.21     & 0.26    & 0.29    & 0.33    & 0.34  & 2  \\
        \hline
        LP         &0.7        & 0.17     & 0.32    & 0.46    & 0.58    & 0.70  & 3  \\
        SDP        &0.7        & 0.17     & 0.32    & 0.46    & 0.58    & 0.69  & 3  \\
        SDP low.   &0.7        & 0.17     & 0.24    & 0.29    & 0.33    & 0.37  & 3  \\
        \hline
        LP         &0.8        & 0.14     & 0.26    & 0.38    & 0.47    & 0.57  & 4  \\
        SDP        &0.8        & 0.14     & 0.26    & 0.37    & 0.47    & 0.57  & 4  \\
        SDP low.   &0.8        & 0.14     & 0.21    & 0.27    & 0.33    & 0.38  & 4  \\
        \hline
    \end{tabular}
}
    \begin{tablenotes}
        \item[a] {\footnotesize Linear Programming }
        \item[b] {\footnotesize Semidefinite Programming }
        \item[c] {\footnotesize Maximum $k$ s.t. $\alpha_k < 1/2$ }
    \end{tablenotes}

\end{threeparttable}
\end{table*}



%
%


\end{document}